\newtheorem{theorem}{Theorem}[section]
\newtheorem{cor}[theorem]{Corollary}
\newtheorem{lemma}[theorem]{Lemma}
\newtheorem{proposition}[theorem]{Proposition}
\theoremstyle{definition}
\newtheorem{definition}[theorem]{Definition}
\newtheorem{example}[theorem]{Example}
\theoremstyle{remark}
\newtheorem{remark}[theorem]{Remark}
\numberwithin{equation}{section}
\newcommand{\Cc}{\mathcal{C}}
\newcommand{\Z}{{\mathbb Z}}
\newcommand{\R}{{\mathbb R}}
\newcommand{\C}{{\mathbb C}}
\newcommand{\Q}{{\mathbb Q}}
\newcommand{\SL}{\operatorname{SL}}
\newcommand{\M}{\operatorname{M}}
\newcommand{\inte}{\operatorname{int}}
\newcommand{\ext}{\operatorname{ext}}
\newcommand{\Id}{\operatorname{Id}}
\title[Nonuniform Fuchsian Codes for Noisy Channels]{Nonuniform Fuchsian Codes for Noisy Channels}
\author[I. Blanco-Chac\'on]{Iv\'an Blanco-Chac\'on$^*$}
\thanks{$^{*}$Partially supported by  MTM2010-17389 (MICINN/ICMAT, Spain)}
\address{Aalto University, Department of Mathematics and Systems Analysis, P.O. Box 11100, FI-00076 AALTO, Helsinki, Finland.}
\email{ivan.blancochacon@aalto.fi}
\author[D. Rem\'on]{Dion\'is Rem\'on$^{**}$}
\thanks{$^{**}$Partially supported by MTM2012-33830 (MICINN/UB, Spain)}
\address{University of Barcelona, Faculty of Mathematics. Gran Via de les Corts Catalanes 585, 08007 Barcelona, Spain.}
\email{dremon@ub.edu}
\author[C. Hollanti]{Camilla Hollanti}
\address{Aalto University, Department of Mathematics and Systems Analysis, P.O. Box 11100, FI-00076 AALTO, Helsinki, Finland.}
\email{camilla.hollanti@aalto.fi}
\author[M. Alsina]{Montserrat Alsina$^{***}$}
\thanks{$^{***}$Partially supported by MTM2012-33830 (MICINN/UB, Spain)}
\address{Universitat Polit\`{e}cnica de Catalunya- BarcelonaTech, Dept. Applied Mathematics III - EPSEM, Av. Bases de Manresa 61-73,  08242  Manresa, Spain.}
\email{montserrat.alsina@upc.edu}
\keywords{Additive white Gaussian noise (AWGN), Fuchsian groups, Coding gain, Decoding complexity, Nonuniform constellations, Point reduction algorithm, Quaternion algebras}
\subjclass[2010]{Primary 94B60; Secondary 94B35,20H10}
\begin{document}

\maketitle

\begin{abstract}We develop a new transmission scheme for additive white Gaussian noisy (AWGN) channels  based on Fuchsian groups from rational quaternion algebras. The structure of the proposed Fuchsian codes is nonlinear and nonuniform, hence conventional decoding methods based on linearity and symmetry do not apply. Previously, only brute force decoding methods with complexity that is linear in the code size exist for general nonuniform codes. However, the properly discontinuous character of the action of the Fuchsian groups on the complex upper half-plane translates into decoding complexity that is logarithmic in the code size via a recently introduced point reduction algorithm.
\end{abstract}


\section*{Introduction}
\label{introduction}
Fuchsian groups constructed from quaternion algebras arise in the study of Shimura curves \cite{shimura1967}, a rich theory with a large number of theoretical applications to various branches of number theory like Jacquet-Langlands correspondence or the proof of the  Shimura-Taniyama-Weil conjecture.  Shimura curves are also present in the theory of error-correcting codes \cite{elkies}. More recently, Fuchsian groups have made an appearance \cite{brazilian_franklin2,brazilian_IEEE,brazilian_COAM,brazilian_franklin} in the context of signal constellation design with potential applications in communications.

In this paper\footnote{A preliminary and partial version of this paper was presented at the International Workshop on Coding and Cryptography (WCC) 2013 \cite{WCC}.}, we will consider  a new family of \emph{Fuchsian codes}. The codes are obtained from unit groups of orders of quaternion algebras acting on the complex upper half-plane, in this way giving rise to complex points that can be used as codewords. Each of the above notions will be properly introduced in the sequel, but let us first concentrate on the general communication problem at hand.

Namely, as the underlying mathematical communication model, we will use the typical additive white Gaussian noise (AWGN) channel model \cite[Ch. 10]{IT}. The transmission process is described by the equation
\begin{equation}\label{AWGN}
y=x+w,
\end{equation}
where $y\in\C$ is the received signal, $x\in\C$ is the transmitted codeword drawn from a finite codebook $\Cc\subset\C$ (also referred to as a constellation), and $w$ is complex AWGN with zero mean and variance $\sigma^2/2$ per real and imaginary part.

Throughout this paper,  we denote by $\Re(z)$ and $\Im(z)$ the real and imaginary part of a complex number $z\in\mathbb{C}$, respectively. The complex absolute value, \emph{i.e.}, Euclidean norm  is denoted by $|z|=\sqrt{\Re(z)^2+\Im(z)^2}$, and the cardinality of a code $\Cc$ by $|\Cc|$. In spite of the slight abuse of notation there should not be any danger of confusion.

\subsection{Contributions, related work and organization}
Next, we summarize our main contributions and reflect our work to relevant earlier work related to Fuchsian groups in the context of communication applications. The main contributions of this paper are:
\begin{itemize}
\item[$\bullet$] We show how to explicitly build nonuniform signal constellations on the complex plane by using Fuchsian groups and M\"obius transformations. Non\-uni\-form signal constellations are included in the digital video broadcasting standard for next generation handheld (DVB-NGH) systems, and they are currently being considered for the future extension of terrestrial DVB with multiple antennas (DVB-T2 MIMO). This creates a great interest and need for nonuniform constellations.
\item[$\bullet$] We describe the whole encoding and decoding process of the proposed Fuchsian codes in full detail, assuming the AWGN communication setting.
\item[$\bullet$] Our construction method allows for decoding complexity which is logarithmic in the code size, enabled by the so-called \emph{point reduction algorithm} \cite{bayerremon} based on determining the tile to which a given point belongs in the hyperbolic upper half-plane. This is a magnificent improvement since, as far as the authors are aware, there are no known optimal decoders for general nonuniform constellations with sublinear complexity.
\item[$\bullet$]
We also discuss the optimization of the Fuchsian codes and propose a new design criterion, hence motivating further study on Fuchsian codes.
 \item[$\bullet$] Finally, we present an alternative method for constructing Fuchsian codes by certain parametrization of the integer tuples defining the M\"obius transformations used for the code construction.
\end{itemize}

Our interest in Fuchsian groups as a basis for code construction stems from a series of recent papers by Palazzo \emph{et al.} In \cite{brazilian_IEEE,brazilian_COAM,brazilian_franklin2,brazilian_franklin}, among others, various interesting connections between Fuchsian groups and signal constellation design are presented.  In \cite{brazilian_IEEE}, the authors construct Fuchsian groups suitable for signal constellation construction.  In  \cite{brazilian_franklin2}, the authors consider the unit disk model  of the hyperbolic half-plane as the signal space, and the noise is modeled as a hyperbolic Gaussian random variable. With the study of the hyperbolic geometry they construct a hyperbolic equivalent to  QAM and PSK constellations and point out that, when the channel model is hyperbolic\footnote{This is the case \emph{e.g.} in power transmission line communications \cite{59paper}.}, the proposed hyperbolic constellations provide higher coding gains than the classical euclidean variants. Building on this work, in \cite{brazilian_franklin} the authors construct dense tessellations and counting Dirichlet domains in tessellations of certain type.  In \cite{brazilian_COAM} the authors use units of quaternion orders to construct space-time matrices with the potential use case being wireless multi-antenna (MIMO) communications. We refer the reader to \cite{SRS,maxorder} as the early references to the use of division algebras and maximal orders in MIMO, and to \cite{viterbo2} for a more general introduction to the topic.

Although codes related to Fuchsian groups have been considered before, our construction is original in that it describes the complete construction and decoding process, whereas earlier work has largely  concentrated on the constellation design while giving little attention to the decoding and performance aspects.
Another key difference to the aforementioned works is  that we are studying codes on the \emph{complex plane} arising from quaternion algebras and Fuchsian groups, and  our aim is to apply the codes to the classical  (euclidean) channel models such as the aforementioned AWGN channel, with possible future extension to fading  channels \cite{viterbo1,viterbo2}. We do not use hyperbolic metric as our design metric, but use the Fuchsian group as a starting point to the  code generation. Nevertheless, our decoder will rely on hyperbolic geometry as opposed to the classical decoders based on euclidean geometry.

The paper is organized as follows. In what remains of this section, we will give some insight to AWGN channel decoding. In section \ref{algebraic} we provide the essential algebraic preliminaries. The Fuchsian code construction process as well as decoding via point reduction algorithm are introduced in Section \ref{construction}. Section \ref{complexitysec} provides a thorough decoding complexity analysis, showing that the decoding algorithm has logarithmic complexity.
We discuss the optimization of the proposed Fuchsian codes in Section \ref{optimization} as a motivation for further research. Conclusions and directions for further research are given in Section \ref{conclusions}. Finally, we present as an appendix an alternative method for constructing Fuchsian codes. This method is called for when the generators of the Fuchsian group are not known.

\subsection{Decoding in AWGN channels}
Let us discuss the decoding process in  AWGN channels before going to the actual code construction in more detail.
This decoding process, i.e., deciding on which codeword $x\in\Cc$ was transmitted given the received signal $y\in\C$ can be done in many different ways. An optimal decoding method is given by the \emph{maximum-likelihood} (ML) decoding, which decides on the codeword $\hat{x}$ having the smallest  squared euclidean distance to $y$,
\begin{equation}\label{ML}
\hat{x}=\textrm{arg\,min}|y-x|^2.
\end{equation}
This amounts to exhaustively enumerating the metric \eqref{ML} for all $x\in\Cc$, and comparing the values obtained in order to find the minimum.  The metric evaluations require $4|\Cc|$ arithmetic operations\footnote{By arithmetic operation we refer to addition, subtraction, multiplication, and division. These can all be considered constant time when we are computing with numbers having fixed precision. In \eqref{ML}, we need to compute the difference $y-x$, square the real and imaginary parts of the result and finally add them, $(\Re(y-x))^2+(\Im(y-x))^2$, which requires two multiplications, one subtraction and one addition per codeword.}, and to compare, we have to compute $|\mathcal{C}|-1$ differences. In total, this amounts to $5|\mathcal{C}|-1$ arithmetic operations. As far as the authors are aware, there are no other known optimal decoding methods for general nonuniform codes.

In \cite{WCC}, we have compared the error performance\footnote{The performance is typically measured as the relative frequency of decoding errors as a function of the signal-to-noise ratio (SNR). SNR is the ratio of the signal and noise powers, and is commonly used to  measure the channel quality.} of some Fuchsian codes to that of quadrature amplitude modulation (QAM) in order to get some preliminary insight as to how close  to these classical constellations we are able to get. We define an odd, symmetric square QAM constellation as
$$
2^{2r}\textrm{-QAM} = \{\pm a\pm b i\ |\ 1\leq a, b \leq 2^r-1,\, 2\not|\, ab\}\subset\Z[i].
$$
 This is a subset of the two-dimensional Gaussian integer lattice\footnote{By a lattice here we refer to a discrete abelian subgroup of $\C$. We refer to \cite{viterbo1} for a general introduction to lattice codes.} $\Z[i]$, hence its ML complexity can be written  as $5|\mathcal{C}|-1=5|S|^2-1$, where $S\subset\Z$ is the corresponding real pulse amplitude modulation (PAM) constellation,
 $$
 2^r\textrm{-PAM}=\{-(2^r-1),\ldots,-3,-1,1,3,\ldots,2^r-1\}\subset\Z.
 $$
 More generally, if we denote by $S$ the underlying real signaling alphabet $\subset\Z$ of a lattice code, the ML complexity $5|\mathcal{C}|-1=5|S|^\kappa-1$ grows exponentially with the lattice dimension $\kappa$.

For lattice codes, the ML complexity can be reduced by using \emph{lattice decoding}, which performs a closest lattice point search within a limited sphere centered at the received point $y$, while ignoring the fact that the codebook is a finite subset of the infinite lattice. The complexity of lattice decoding is hence independent of $|\Cc|$, and it actually turns out to be polynomial (cf. \cite{ViBu}) in $|S|$ for a given lattice and sphere radius. Unfortunately it also performs poorly compared to ML decoding. The performance can be improved by taking into account the code boundaries, often referred to as \emph{sphere decoding}, but this  again increases the complexity. Naturally, the worst case complexity of a sphere decoder is always upper bounded by the complexity of exhaustive search.

The complexity comparison between the QAM constellations and Fuchsian constellations is not straightforward since, in practice, one does not use ML, lattice or sphere decoder for decoding QAM in the single-input single-output (SISO) case (cf. Eq. \eqref{AWGN}). The difficulty of complexity comparison stems from the fact that, while the decoding complexity of the proposed Fuchsian codes largely arises from arithmetic operations, the decoding complexity of QAM in the SISO case is, in practice\footnote{We gratefully acknowledge Peter Moss (BBC Research \& Development) for sharing his knowledge and insights regarding AWGN channel decoding and complexity.}, a combination of arithmetic operations and memory usage due to maintenance of a look-up table. So for QAM, this finally boils down   to resource usage in a particular chip, the trade-off being memory vs. arithmetic operations. In addition, the estimate quality of the received signal is a parameter, since the amount of memory depends on the bit-resolution of the look-up table. In the literature, a look-up table is normally hand-waved as having negligible complexity, whereas in reality a very large table could still be highly inconvenient. Due to this comparison mismatch, we compare the complexity of Fuchsian codes to the ML decoding complexity $5|\Cc|-1$. This is also a more righteous comparison in the sense that, as noted before, nonuniform codes are not previously known to admit sublinear decoding complexity.  Indeed, one of the main contributions of this paper is  that our codes enable the use of a decoding algorithm with complexity that is logarithmic in the code size $|\Cc|$.

\begin{remark} We have chosen to use the number of arithmetic operations as the complexity measure. Another option would be to only count multiplications and divisions, since these are more complex than addition and subtraction. Nevertheless, both options yield very similar results. In addition, when the numbers involved in the arithmetic operations have known and predetermined precision, all arithmetic operations can be thought of as constant-time operations.
\end{remark}

\section{Algebraic preliminaries}
\label{algebraic}

In this section, we survey some facts on the arithmetic of quaternion algebras in order to construct a discrete group $\Gamma\in\SL(2,\R)$ and its fundamental domain in the complex upper half-plane. We mainly follow \cite{alsinabayer} and refer the reader to the well-known references \cite{katok} and \cite{vigneras} for more details.

\subsection{Quaternion algebras and Fuchsian groups}

For square-free $a,b\in\mathbb{Q}^*=\Q\setminus\{0\}$,  let $H=\left(\frac{a,b}{\mathbb{Q}}\right)$ be the quaternion $\mathbb{Q}$-algebra generated by $I$ and $J$
with the standard relations $I^2=a,J^2=b, K=IJ=-JI$. Up to isomorphism, we can assume $a$, $b$ are square-free nonzero integers. For $\omega= x+yI+zJ+tK\in H$, the conjugate is $\overline{\omega}=x-yI-zJ-tK$, and the reduced trace and the reduced norm are defined as
$$
\mathrm{Tr}(\omega)=\omega+\overline{\omega}=2x, \quad
\mathrm{N}(\omega)=\omega\overline{\omega}=x^2-ay^2-bz^2+abt.
$$
Let us denote by $\phi$ the following monomorphism of $\mathbb{Q}$-algebras:
\begin{equation}\label{mono}
\begin{array}{ccc}
\phi: \left(\dfrac{a,\, b}{\mathbb{Q}}\right) & \to & \M(2,\mathbb{Q}(\sqrt{a}))\\
x+yI+zJ+tK & \mapsto &
\left(\begin{array}{ccc} x+y\sqrt{a} &\phantom{x} & z+t\sqrt{a}\\
b(z-t\sqrt{a})&\phantom{x} & x-y\sqrt{a}\end{array}\right).
\end{array}
\end{equation}
Notice that for any $\omega\in H$, $\mathrm{N}(\omega)=\mathrm{det}\left(\phi(\omega)\right)$ and $\mathrm{Tr}(\omega)=\mathrm{Tr}\left(\phi(\omega)\right)$.

A quaternion $\mathbb{Q}$-algebra is either an algebra isomorphic to the matrix algebra $\M(2,\mathbb{Q})$ or a skew field, in the latter case typically called a division algebra. For any absolute value $|\phantom{x}|_p$ of $\mathbb{Q}$ attached to a place $p$, a place being either a prime number  or infinity,  $H_p:=H\otimes_{\mathbb{Q}}\mathbb{Q}_p$ is a quaternion $\mathbb{Q}_p$-algebra. For a local field $\mathbb{Q}_p$ or $\mathbb{R}$ there exists a unique quaternion division algebra. In the case of $\mathbb{R}$ it is the algebra of  \emph{Hamiltonian quaternions}.
If $H_p$ is a division algebra,  $H$ is called ramified at $p$.
The discriminant $D_H$ is defined as the product of the primes at which $H$ ramifies. Any quaternion algebra is ramified at a finite even number of places. Moreover, two quaternion $\mathbb{Q}$-algebras are isomorphic if and only if they have the same discriminant.

\begin{definition}
A rational quaternion algebra $H$ is called definite if it is ramified at $p=\infty$, and indefinite otherwise.
An indefinite quaternion algebra is called small ramified if $D_H$ is equal to a product of two distinct primes.
\end{definition}

An element $\alpha\in H$ is called integral if $\mathrm{N}(\alpha),\mathrm{Tr}(\alpha)\in \mathbb{Z}$. In general the set of integral elements in a quaternion algebra is not a ring.

A $\mathbb{Z}$-lattice  of $H$ is a finitely generated torsion-free $\mathbb{Z}$-module contained in $H$.
An order $\mathcal{O}$ of $H$ is a $\mathbb{Z}$-lattice and a ring such that $\mathbb{Q}\otimes \mathcal{O}\simeq H$.
Each order of a quaternion algebra is contained in a maximal order.
In an indefinite rational quaternion algebra, all the maximal orders are conjugate to each other (cf.\,\cite{vigneras}).

\begin{definition}
 Fix a quaternion algebra $H=\left(\frac{a,b}{\mathbb{Q}}\right)$ having discriminant $D> 1$, $D$ a product of an even number of primes, and a maximal order $\mathcal{O}\subset H$. Since $H$ is indefinite we can always assume $a>0$. Let us denote by $\Gamma(D,1)$ the image under the monomorphism $\phi$ (cf. Eq.(3)) of the group of units of reduced norm $1$ in $\mathcal{O}$, that is:
 $$\Gamma(D,1)=\phi( \{\omega\in\mathcal{O} \,\mid \, \mathrm{N}(\omega)=1\})\subseteq \M(2,\mathbb{Q}(\sqrt{a})).$$
\end{definition}

\begin{remark} The group $\Gamma(D,1)$ is a \emph{Fuchsian group}, a discrete subgroup of $\SL(2,\R)$.
Its elements will be called quaternion transformations. More details about its expression can be found in \cite{alsinabayer}.
\end{remark}

As a reference, consider the family of quaternion algebras $H=\left(\frac{p,-1}{\mathbb{Q}}\right)$. For any prime $p\equiv 3\mod 4$, it is an indefinite quaternion algebra of discriminant $2\cdot p$, and $\mathbb{Z}[1,I,J,(1+I+J+IJ)/2]$ is a maximal order. The group of quaternion transformations $\Gamma(2p,1)$ is equal to
$$
\left\{\gamma=\dfrac{1}{2}\left(\begin{array}{rr}\alpha & \beta\\-\beta' & \alpha'\end{array}\right) \,\mid \,  \alpha,\beta\in\Z[\sqrt{p}],\mathrm{det}(\gamma)=1,\alpha\equiv\beta\equiv\alpha\sqrt{p}\pmod{2}\right\},
$$
where $\alpha\mapsto \alpha'$ is the quadratic conjugation: $\alpha=a+b\sqrt{p}\in\Z[\sqrt{p}]$, $\alpha'=a-b\sqrt{p}$.

\begin{remark} The above construction is also valid for $D=1$. In this case, the corresponding group is the modular group $\SL(2,\Z)$.
\end{remark}

\subsection{Fundamental domains for quaternion groups}

Consider the complex upper half-plane $\mathcal{H}=\{z\in\mathbb{C}  \,\mid \,  \Im(z)>0\}$ endowed with the structure given by the hyperbolic metric (cf. \cite{alsinabayer}, \cite{katok}).

The group $\mathrm{SL}(2, \mathbb{R})$ acts on the complex upper half-plane $\mathcal{H}$ by M\"{o}bius transformations
and its action factorizes through $\mathrm{SL}(2, \mathbb{R})/\pm\Id$. Namely,
\begin{equation}\label{action}
\begin{array}{ll}
\text{for all } z\in \mathcal{H}, \quad & \gamma=\left( \begin{array}{cc} a_{11} & a_{12} \\ a_{21} & a_{22} \end{array}\right)\in \mathrm{SL}(2,\mathbb{R}),
\\
\gamma(z)=\dfrac{a_{11} z + a_{12}}{a_{21} z + a_{22}},\quad & \gamma(\infty)=\dfrac{a_{11}}{a_{21}}=\displaystyle\lim_{z\rightarrow \infty} \gamma(z).
\end{array}
\end{equation}

The Fuchsian groups are discrete subgroups of $\mathrm{SL}(2, \mathbb{R})$ and they have a proper and discontinuous action on $\mathcal{H}$.

\begin{definition} Let $\Gamma$ be a Fuchsian group. A connected closed hyperbolic polygon $\mathcal{F}$ in $\mathcal{H}$ is a fundamental domain for the action of $\Gamma$ on $\mathcal{H}$ if
\begin{itemize}
\item[a)] for any $z,z'$ in the interior of $\mathcal{F}$, if there exists $\gamma\in\Gamma$ such that $\gamma(z)=z'$, then $z=z'$ and $\gamma=\Id$,
\item[b)] for any $z\in\mathcal{H}$, there exists $z'\in\mathcal{F}$ and $\gamma\in\Gamma$ such that $\gamma(z)=z'$.
\end{itemize}
\end{definition}

By using fundamental domains with a pairing of the edges, a presentation of a Fuchsian group can be found.  Explicit fundamental domains for several Fuchsian groups of quaternion transformations $\Gamma(D,1)$ and their presentations can be found in \cite{alsinabayer}. Next, we include some examples of the presentations for the groups $\Gamma(6,1)$, $\Gamma(10,1)$ and $\Gamma(15,1)$ (cf. \cite{alsinabayer} Thm. 5.46, Thm. 5.47, Thm. 5.49),   as they will be used to exemplify the results of this paper. An algorithm applicable to a more general setting was stated in \cite{voi09}.

Each election of a fundamental domain for the action of a Fuchsian group $\Gamma(D,1)$ leads to a regular tessellation of the upper half-plane by hyperbolic polygons, which will be useful for the construction of Fuchsian codes.

\begin{example}\label{dom-pre61}
Consider the Fuchsian group $\Gamma(6,1)$, which will be used as the main example throughout the paper. A fundamental domain is displayed in Fig. 1 
and the corresponding presentation is the following:
$$
\Gamma(6,1)/\pm\Id=\langle g_1,g_2,g_3 \,\mid \, g_1^3=g_2^3=g_3^2=(g_1^{-1}g_3g_2)^2=1\rangle, \text{ where }
$$
$g_1:= \frac{1}{2}\left(\begin{array}{rr} 1+\sqrt{3} & 3-\sqrt{3} \\  - 3-\sqrt{3} & 1-\sqrt{3}\end{array}\right)$,
$g_2:= \frac{1}{2}\left(\begin{array}{rr}
1+\sqrt{3} & -3+\sqrt{3} \\
3+\sqrt{3} & 1-\sqrt{3}
\end{array}\right)$,
$g_3:= \left(
\begin{array}{rr} 0 & 1 \\   -1 & 0 \end{array}
\right)$.
\end{example}

\begin{figure}[H]\label{f-fundamentaldomain61}
     \includegraphics[width=0.55\textwidth]{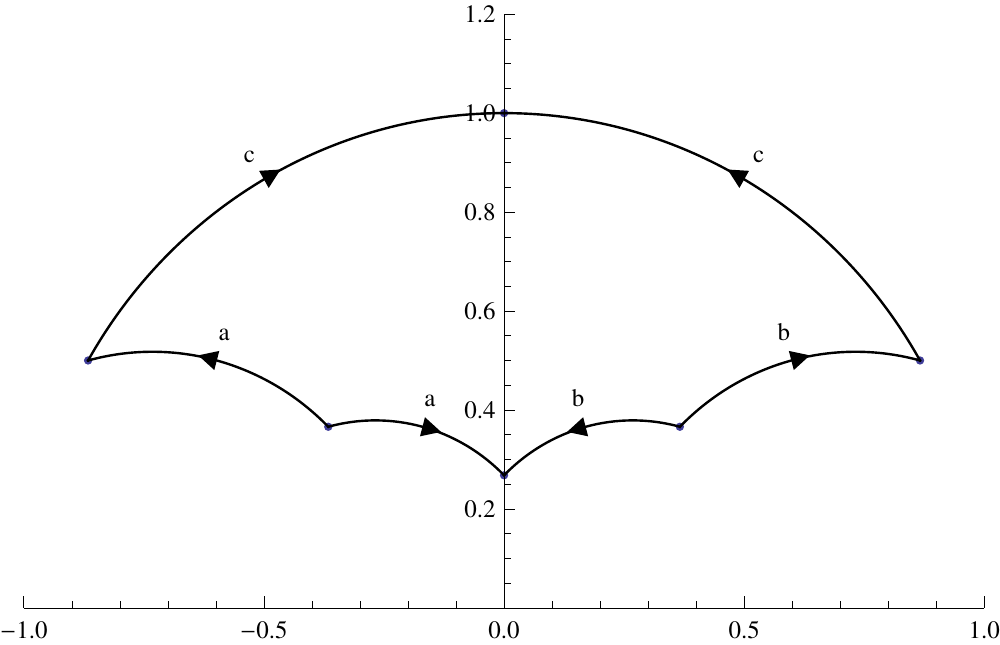}
     \caption[]{A fundamental domain for $\Gamma(6,1)$ with the pairing of the edges.}
\end{figure}

\begin{example}\label{dom-pre101}
A presentation for the Fuchsian group $\Gamma(10,1)$ is the following:
$$
\Gamma(10,1)/\pm\Id=\langle g_1,g_2,g_3 \,\mid \, g_1^3=g_2^3=(g_3^{-1}g_1)^3=(g_3^{-1}g_2)^3=1\rangle, \text{ where }
$$
$$g_1:= \frac{1}{2}\left(\begin{array}{cr} 1+\sqrt{2} & -1+\sqrt{2} \\  -5(1+\sqrt{2}) & 1-\sqrt{2}\end{array}\right),
\quad
g_2:= \frac{1}{2}\left(\begin{array}{cr} 1+\sqrt{2} & 1-\sqrt{2} \\  5(1+\sqrt{2}) & 1-\sqrt{2}\end{array}\right)
$$
$$
\text{ and } \quad
g_3:= \left(\begin{array}{cr} 3+2\sqrt{2} & 0 \\   0 & 3-2\sqrt{2} \end{array}\right).
$$
\end{example}

\begin{example}\label{dom-pre151}
A presentation for the Fuchsian group $\Gamma(15,1)$ is the following:
$$
\Gamma(15,1)/\pm\Id=\langle g_1,g_2,g_3 \,\mid \, (g_1g_3)^3=(g_3g_2^{-1}g_1g_2)^3=1\rangle, \text{ where }
$$
$g_1:= \frac{1}{2}\left(\begin{array}{cc} -4+3\sqrt{3} & -\sqrt{3} \\  5\sqrt{3} & -4-3\sqrt{3}\end{array}\right)$,
$g_2:= \frac{1}{2}\left(\begin{array}{cc} 3 & 1 \\   5 & 3 \end{array}\right)$,
$g_3:= \left(\begin{array}{cc} 2+\sqrt{3} & 0 \\   0 & 2-\sqrt{3} \end{array}\right)$.
\end{example}

The construction of fundamental domains is based on the use of isometric circles, a geometric object that will be used in the implementation of our decoding algorithm.

\begin{definition}
Given $\gamma=\begin{pmatrix}a_{11} & a_{12} \\ a_{21} & a_{22} \end{pmatrix}\in\Gamma$ such that $a_{21}\not=0$, the \emph{isometric circle} of $\gamma$ is
$$
I(\gamma)= \{z\in\mathcal{H} \,\mid \,  |a_{21} z+a_{22}| =1\}.
$$

The center and the radius of $I(\gamma)$ are the real numbers $-a_{22}/a_{11}$ and $|1/a_{21}|$, respectively.
\end{definition}

\begin{definition}\label{Mdef}
For a Fuchsian group $\Gamma$ and a fixed fundamental domain $\mathcal{F}(\Gamma)$ as above, let us denote by $G$ the set of elements in $\Gamma$ such that the edges of $\mathcal{F}(\Gamma)$ are included in the set of isometric circles defined by the elements of $G$. Let us denote $M=|G|$.
\end{definition}

\begin{remark}\label{rem-G}
We will split $G$ in two sets denoted by $G^{\text{int}}$ and $G^{\text{ext}}$ in such a way that the fundamental domain $\mathcal{F}(\Gamma)$ is the closure of
 $$ \bigcap_{\gamma\in G^{\text{ext}}} \ext(I(\gamma)) \bigcap_{\gamma\in G^{\text{int}}} \inte(I(\gamma)), $$
 where $\ext(I(\gamma))$ and $\inte(I(\gamma))$  denote the exterior and the interior of the isometric circle $I(\gamma)$, respectively.  The presentation of the group arises from the pairing of the edges; thus we can assume the generators of $\Gamma$ are included in $G$.
\end{remark}
This is illustrated in Fig. 2, where we have depicted a fundamental domain for $\Gamma(6,1)$ (cf. Ex. 1). The isometric circles corresponding to the edges of the hyperbolic polygon are displayed, labeled in terms of the generators of the group. In this example,
$$ G^{\text{ext}}=\{g_1,g_1^{-1},g_2,g_2^{-1}\}, \quad G^{\text{int}}=\{ g_3 \}, \qquad \text{and } M=5.$$

\begin{figure}[H]\label{f-fundamentaldomain61}
     \includegraphics[width=0.55\textwidth]{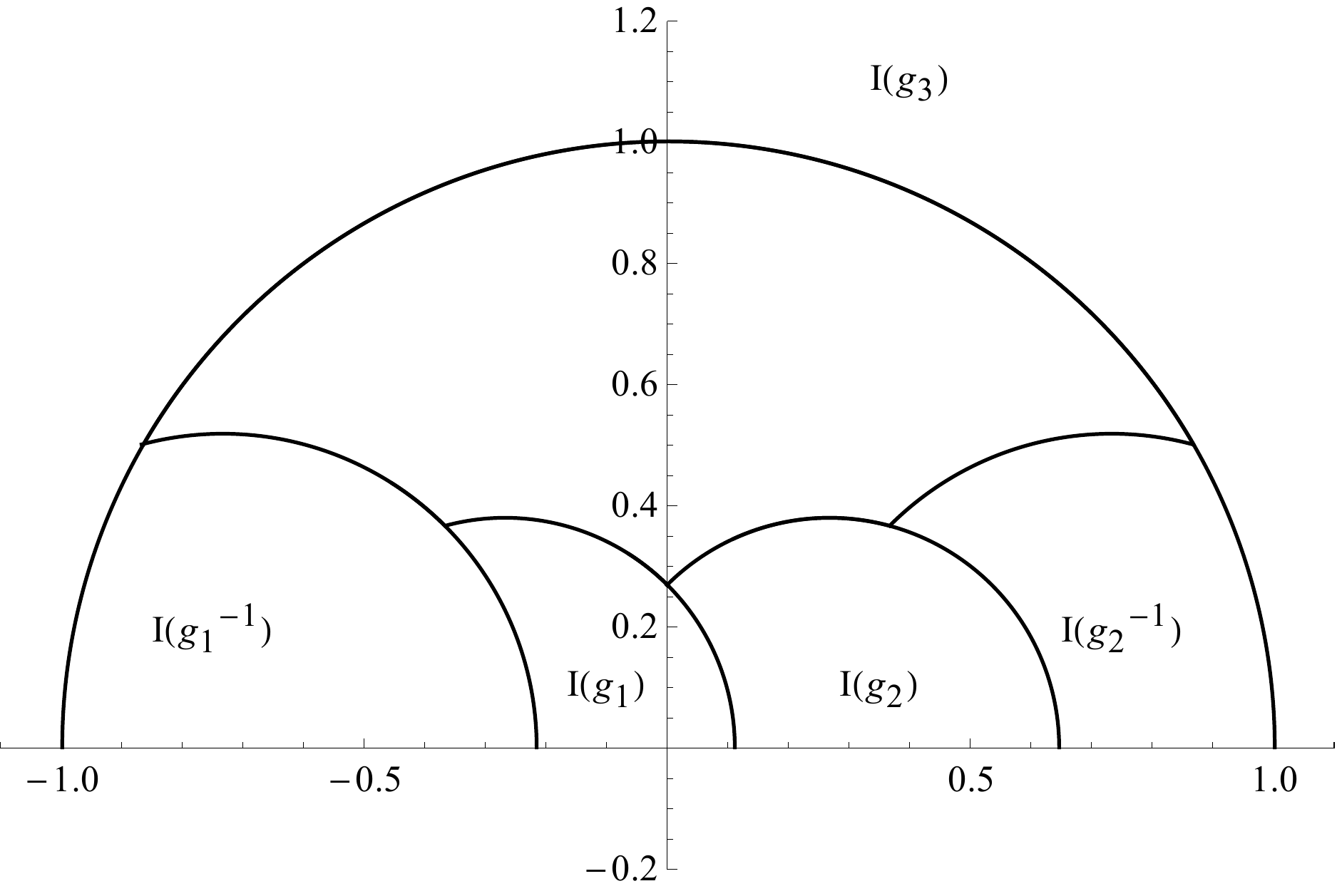}
     \caption[]{A fundamental domain for $\Gamma(6,1)$  labeled with isometric circles.}
\end{figure}

\section{Construction of Fuchsian codes}
\label{construction}

In this section, we will show in detail how to construct and decode Fuchsian codes in an AWGN channel. The first subsection describes our proposal for the construction of a new family of Fuchsian codes. The second subsection introduces the point reduction algorithm (PRA) \cite{bayerremon}, which will be used for decoding in the third subsection.

In what follows $\Gamma$ will be a Fuchsian group $\Gamma(D,1)$ with $D>1$ a product of an even number of primes. In fact, our construction could be formulated more generally in terms of compactness for  any \emph{cocompact} Fuchsian group $\Gamma$.

\subsection{Construction}

Let us now fix a Fuchsian group $\Gamma=\Gamma(D,1)$, a fundamental domain $\mathcal{F} = \mathcal{F}(\Gamma)$ and an ordered set of generators $G$.  We choose  a point $\tau$ in the interior
of $\mathcal{F}$; this condition ensures that $\gamma(\tau)\not=\tau$ for all $\gamma\in \Gamma\setminus \{\pm\Id\}$.

The first step in the code construction is to choose $N$ elements in $\Gamma$. We denote this finite set by $S_{\Gamma}=\{ \gamma_1, \ldots \gamma_N \}$. The first elements can be directly taken to be the generators of the group, and the rest will be expressed as products of generators.
Later on in this section we will discuss the choice of the elements $\gamma \in S_\Gamma$ in more detail.

Considering the action of the group $\Gamma$ in the complex upper half-plane $\mathcal{H}$ defined in \eqref{action}, we obtain the points $\gamma_1(\tau), \ldots \gamma_N(\tau)$ in $\mathcal{H}$. These will serve as the first points to be included in our codebook. We can double the number of points by expanding to the lower half-plane in a natural way by including the opposites $-\gamma(\tau)$. This has two advantages:
\begin{enumerate}
\item Duplicating the code size in this way does not increase the average/maximum energy (cf. Eq. \eqref{energy}) of the constellation, since $|\gamma(\tau)|=|-\gamma(\tau)|$.
\item The complexity of our decoding algorithm (Sec. \ref{dec}) is related to the maximum number of generators $g_i$ in the presentation of $\gamma$ as a product of generators. Hence, it is favorable to construct the code by using as few different matrices $\gamma$ as possible to avoid having to involve more generators than necessary.
\end{enumerate}
Table 1
below summarizes the construction process.

\begin{table}[H]\label{construction-figure}
\caption{Sketch of the code construction process.}
$
\boxed{
\begin{array}{c}
\boxed{
\begin{array}{c}
\text{FIX:} \\
\text{the Fuchsian group } \Gamma \text{ and a} \\
\text{fundamental domain  } \mathcal{F}(\Gamma)
\end{array}
      }
\\
\downarrow
\\
\boxed{
\begin{array}{c}
\text{CHOOSE:} \\
S_{\Gamma}=\{\gamma_1, \ldots \gamma_N \}\subset \Gamma \\
\tau \text{ in the interior of } \mathcal{F}(\Gamma)
\end{array}
}
\\
\downarrow
\\
\boxed{
\begin{array}{c}
\text{CONSTRUCT the CODE:} \\
\mathcal{C}=\{\pm\gamma_1(\tau), \ldots \pm\gamma_N(\tau)\} \\
|\mathcal{C}|=2N
\end{array}
}
\end{array}
}
$
\end{table}

Formally, we define a Fuchsian code as follows.
\begin{definition}
Let  $\Gamma$ be a Fuchsian group defined as above. Given a fundamental domain $\mathcal{F}(\Gamma)$, a set $S_{\Gamma}$,  and a point $\tau$ in the interior of $\mathcal{F}(\Gamma)$, we define the associated \emph{Fuchsian code} as $\mathcal{C} = \left\{\pm\gamma(\tau) \, \mid \,  \gamma \in S_\Gamma\right\}\subseteq \mathbb{C}$.

The point $\tau$ is called the \emph{center} of the code. For a fixed code size $q=|\mathcal{C}|=2N$, the corresponding constellations will be referred to as \emph{nonuniform Fuchsian constellations},  $q$-NUF in short. \end{definition}

\begin{remark} Nonuniform constellations have been used already in early-state signal transmission, \emph{e.g.}, in the so-called  \emph{codec} transmission, and are present more recently in the DVB-NGH standard. Currently, the use of certain nonuniform constellations is being discussed and seriously considered for the multi-antenna extension of the terrestrial DVB standard (DVB-T2). While in this paper we are considering the AWGN channel, which in the context of DVB is mainly relevant for satellite transmission and the theoretical understanding of the codes, our aim is to generalize this framework to fading channels and to design codes directly applicable to the general DVB framework. For more information, see \cite{dvb}.
\end{remark}
\begin{example}
Let us consider the Fuchsian group $\Gamma(6,1)$ and the fundamental domain displayed in Table 2 below, which collects the data used for generating Fuchsian codes of size 4, 8, and 16, in terms of the generators (cf. the presentation given in Example \ref{dom-pre61}. The explicit values of the codewords in $\mathcal{C}$ are also included. The 16-NUF constellation is displayed in Fig. 3. For brevity, the codes arising from Example \ref{dom-pre101} and Example \ref{dom-pre151} are given in terms of the center and generators.
\end{example}
\def\arraystretch{1.5}
\begin{table}[H]\label{table-data61-4816}
\caption{Explicit choices for $\tau$ and $S_{\Gamma}$ and the list of resulting codewords in $\mathcal{C}\cap\mathcal{H}$, $q=|\mathcal{C}|=4,8,16$, for  $\Gamma(D,1)$.}
\begin{tabular}{|c|c|}
\hline
$\mathbf{\Gamma(6,1)}$ & $\tau= \dfrac12 i$ \\
\hline
  $q=4$ & $S_{\Gamma}=\{\textrm{Id}, \, g_1^{-1}\}$  \\
\hline
 Codewords & $\frac{i}{2} , -\frac{5}{7} (-3+2 \sqrt{3}) -  \frac{4}{7} i (-2+\sqrt{3}) $\\
\hline
\hline
$q=8$ &   $S_{\Gamma}=\{\textrm{Id}, \, g_1^{-1}, \, g_2^{-1}, \, g_3\}$ \\
\hline
 Codewords
 & $\frac{i}{2} , -\frac{5}{7} (-3+2 \sqrt{3}) -  \frac{4}{7} i (-2+\sqrt{3}) $\\
 & $\frac{5}{7} (-3+2 \sqrt{3}) -  \frac{4}{7} i (-2+\sqrt{3})  , 2i$\\
\hline
\hline
$q=16$ &  $S_{\Gamma}=\{\textrm{Id}, \, g_1^{-1}, \, g_2^{-1}, \, g_3, \, g_1, \, g_2, g_1^{-1}g_3, g_2g_3\}$ \\
\hline
 Codewords  & $\frac{i}{2} , \,  -\frac{5}{7} (-3+2 \sqrt{3}) -  \frac{4}{7} i (-2+\sqrt{3}) $\\
                             & $\frac{5}{7} (-3+2 \sqrt{3}) -  \frac{4}{7} i (-2+\sqrt{3})  ,\,  2i$\\
                              & $\frac{1}{193} (96-131 \sqrt{3}) + \frac{4}{193} i \left(14+\sqrt{3}\right), \,  -\frac{1}{193} (96-131 \sqrt{3}) + \frac{4}{193} i \left(14+\sqrt{3}\right) $\\
                               & $-\frac{5}{13} (-3+2 \sqrt{3}) -\frac{4}{13} i (-2+\sqrt{3}), \frac{5}{13} (-3+2 \sqrt{3}) -\frac{4}{13} i (-2+\sqrt{3})$\\
\hline

\hline
$\mathbf{\Gamma(10,1)}$ & $\tau= \frac{2}{5}i$  \\
\hline
$q=4$ &  $S_{\Gamma}=\{\Id, \, g_1^{-1}\}$  \\
\hline
$q=8$ &  $S_{\Gamma}=\{\Id, \, g_1^{-1}, g_2^{-1}, g_1\}$  \\
\hline
$q=16$ &  $S_{\Gamma}=\{\Id, \, g_1^{-1}, g_2^{-1}, g_1,g_2, g_1g_2^{-1}, g_2g_1^{-1}, g_3^{-1}\}$  \\
\hline
\hline
$\mathbf{\Gamma(15,1)}$ & $\tau= \frac{9}{10}i$  \\
\hline
$q=4$ &  $S_{\Gamma}=\{\Id, \, g_2\}$  \\
\hline
$q=8$ &  $S_{\Gamma}=\{\Id, \, g_2, g_1, g_2^{-1}\}$  \\
\hline
$q=16$ &  $S_{\Gamma}=\{\Id, \, g_2, g_1, g_2^{-1}, g_1^{-1}, g_3^{-1}, g_2^{-1}g_1g_2, g_2^{-1}g_1^{-1}g_2\}$  \\
\hline

\end{tabular}
\end{table}

\begin{figure}[h] \label{f-codewords61-16}
     \caption[]{Example of 16-NUF constellation for $\Gamma(6,1)$.}
     \scalebox{1}{
     \includegraphics[width=0.40\textwidth]{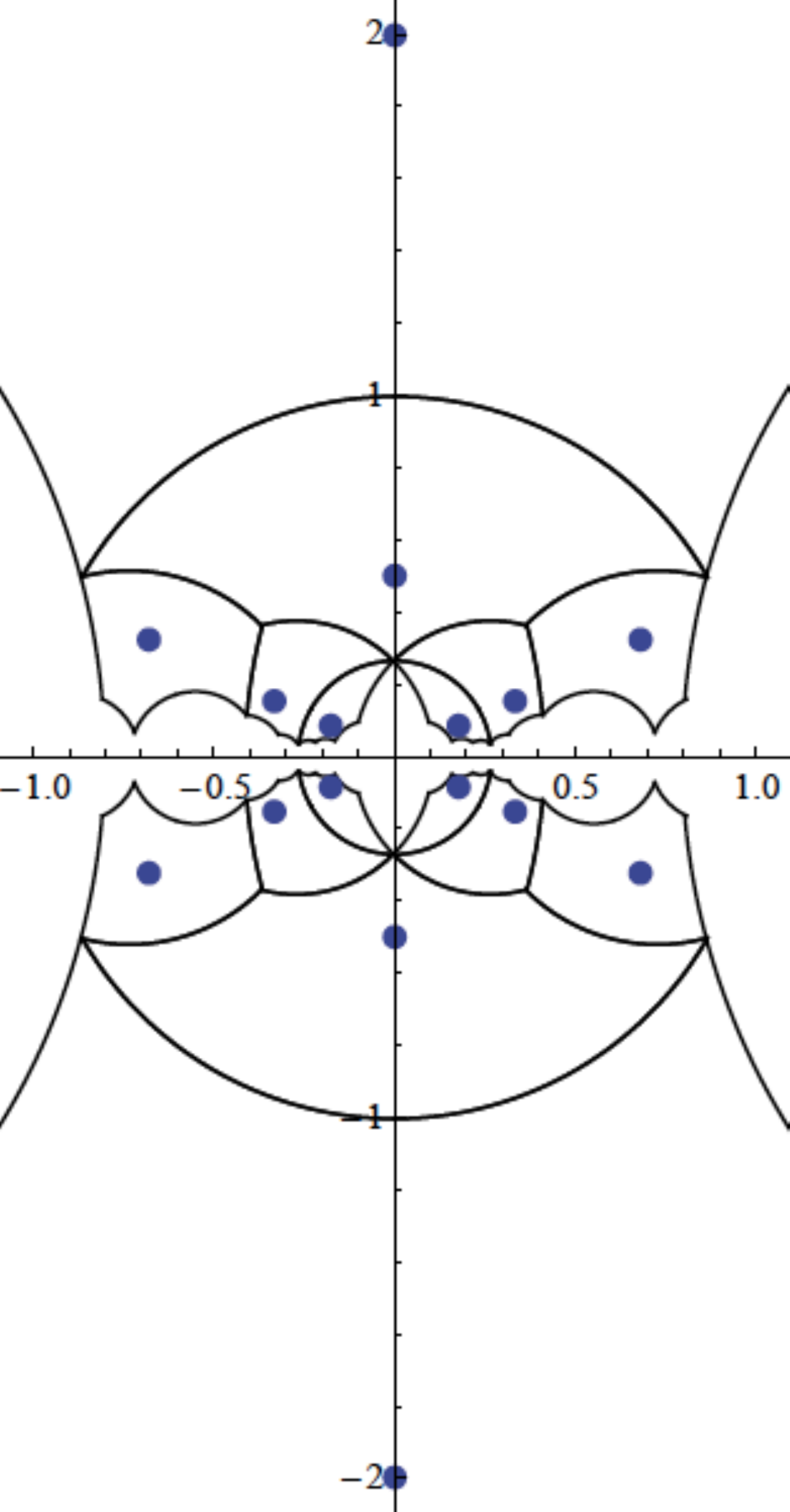}
                }
\end{figure}


\subsection{The point reduction algorithm (PRA)}

In order to decode the Fuchsian codes in AWGN channels, we first show that this problem is equivalent to certain point reduction in the upper half-plane. As we saw in the definition of a fundamental domain, any point in the complex upper half-plane has its equivalent in the fundamental domain. Using this fact and a finite number of M\"obius transformations we will be able to recover the transmitted points. To this end, we will employ the so-called point reduction algorithm. In what follows, we will explain the general guidelines of the algorithm, originally presented in \cite{bayerremon}.

Given a cocompact Fuchsian group $\Gamma$ and a fundamental domain $\mathcal{F}(\Gamma)\subseteq \mathcal{H}$,
the algorithm reduces a given point $z\in \mathcal{H}$ to a point $z_0\in \mathcal{F}$, and yields a transformation $t\in \Gamma$ such that $t(z)=z_0$. Shortly:

\begin{tabular}{ll}
\textbf{Input:}  & a point $z\in \mathcal{H}$.\\
\textbf{Output:} & a point $z_0\in \mathcal{F}$, and a matrix  $t\in \Gamma$ such that $t(z)=z_0$.
\end{tabular}


Next, consider the finite ordered set $G$ derived from the fundamental domain by taking into account the isometric circles (cf. Section \ref{algebraic}), $G=G^{\text{int}}\cup G^{\text{ext}}$. The following algorithm is adapted from the point reduction algorithm in \cite{bayerremon}, where a proof of correctness is given, based on results in \cite{katok}. The complexity of the point reduction algorithm is treated in full detail in Section \ref{complexitysec}.

\begin{algorithm}\label{algorithm}
\textbf{ALGORITHM}\\
  \begin{itemize}
\item[]  \textbf{Step 1} Initialize: $z_0 = z$ and $t = \mathrm{Id}$.
\item[]  \textbf{Step 2} Check if $z_0 \in \mathcal{F}$.
         \newline \hphantom{\textbf{Step 2\,}}
         If $z_0 \in \mathcal{F}$, return $z_0$ and $t$. Quit.
         \newline \hphantom{\textbf{Step 2\,}}
        If $z_0 \not\in \mathcal{F}$, return $g\in G$ such that:
        \newline \hphantom{\textbf{Step 2\,}}
        \hspace{0.5cm}$z_0 \in \mathrm{int}(\mathrm{I}(g))$, if $g\in G^{\text{ext}}$,
                \newline \hphantom{\textbf{Step 2\,}}
         \hspace{0.5cm}$z_0 \in \mathrm{ext}(\mathrm{I}(g))$ if $g\in G^{\text{int}}$.
\item[]  \textbf{Step 3}    Compute $z_0 = g (z_0)$ and $t = g\cdot t$. Go to Step 2.
  \end{itemize}
\end{algorithm}

The following definition of \emph{depth} will help us to choose codewords that contribute as little as possible to the complexity of the above algorithm.

\begin{definition}\label{depth}
Let $\mathcal{F}(\Gamma)$ be a fundamental domain of a Fuchsian group $\Gamma$ similarly as above.
We define the \emph{depth  of a point} $z \in \mathcal{H}$ as the number $\ell (z)$  of iterations of the algorithm, namely iterations of step 3, to reduce $z$ to a point $z_0\in \mathcal{F}$.
The \emph{depth of a matrix} $\gamma \in \Gamma$ is defined as the number $\ell(\gamma)$  of iterations of the algorithm to reduce the point $\gamma(p)$ to the point $p$, for any point $p$ in the interior of $\mathcal{F}(\Gamma)$. It is straightforward to see that this number is independent of the choice of p, so it is well-defined. The \emph{depth of a set} $S\subseteq \Gamma$ is defined as $\ell(S) = \max \{\ell(\gamma) \, \mid \, {\gamma\in S}\}$.
\end{definition}

We immediately observe that the identity element $\mathrm{Id} \in \Gamma$ satisfies $\ell(\mathrm{Id}) = 0$.
We can control the depth as follows.

\begin{lemma}
   Let $\Gamma$ be a cocompact Fuchsian group, $S$  a finite ordered set of generators of $\Gamma$ such that for any $\gamma\in S$ also $\gamma^{-1}\in S$. Then $ \ell(S)=1$.
\end{lemma}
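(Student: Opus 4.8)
The plan is to prove the stronger pointwise statement that $\ell(\gamma)=1$ for every nontrivial $\gamma\in S$; since $\ell(\mathrm{Id})=0$, since $\Gamma$ is infinite (being cocompact) so that $S$ contains a nontrivial element, and since $\ell(S)=\max_{\gamma\in S}\ell(\gamma)$, the claim $\ell(S)=1$ follows immediately. So fix $\gamma\in S\setminus\{\mathrm{Id}\}$ and a point $p$ in the interior of $\mathcal{F}=\mathcal{F}(\Gamma)$. I would first dispatch the lower bound $\ell(\gamma)\ge 1$: the image $\gamma(p)$ lies in the interior of the neighbouring tile $\gamma(\mathcal{F})$, which is distinct from $\mathcal{F}$, so by property a) in the definition of a fundamental domain $\gamma(p)\notin\mathrm{int}(\mathcal{F})$; as $\ell(\gamma)$ is independent of $p$, we may take $p$ with $\gamma(p)\notin\partial\mathcal{F}$ as well, whence $\gamma(p)\notin\mathcal{F}$ and the algorithm performs at least one iteration of Step 3.

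The core is the upper bound $\ell(\gamma)\le 1$, for which I would exhibit a single reduction step that carries $\gamma(p)$ into $\mathcal{F}$. Because $S$ is symmetric, $\gamma^{-1}\in S$, and since the elements of $S$ arise from the edge-pairing of $\mathcal{F}$ they are contained in $G$ (cf.\ Remark \ref{rem-G}); in particular $\gamma^{-1}\in G$ and $I(\gamma^{-1})$ carries an edge of $\mathcal{F}$. The key geometric input is the standard behaviour of a M\"obius map on its isometric circle: $\gamma$ carries $\mathrm{int}(I(\gamma))$ onto $\mathrm{ext}(I(\gamma^{-1}))$ and $\mathrm{ext}(I(\gamma))$ onto $\mathrm{int}(I(\gamma^{-1}))$. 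Combining this with the description of $\mathcal{F}$ in Remark \ref{rem-G} as the intersection of the appropriate sides of the circles $I(g)$, $g\in G$, gives two things at once. First, $\gamma\in G^{\mathrm{ext}}\Leftrightarrow\gamma^{-1}\in G^{\mathrm{ext}}$: if $\mathcal{F}\subseteq\mathrm{ext}(I(\gamma))$ then $\gamma(\mathcal{F})\subseteq\mathrm{int}(I(\gamma^{-1}))$, and since $\gamma(\mathcal{F})$ sits on the side of $I(\gamma^{-1})$ opposite to $\mathcal{F}$, we get $\mathcal{F}\subseteq\mathrm{ext}(I(\gamma^{-1}))$ (and dually in the interior case). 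Second, the same computation places $\gamma(p)$ on the side of $I(\gamma^{-1})$ opposite to $\mathcal{F}$, namely in $\mathrm{int}(I(\gamma^{-1}))$ when $\gamma^{-1}\in G^{\mathrm{ext}}$ and in $\mathrm{ext}(I(\gamma^{-1}))$ when $\gamma^{-1}\in G^{\mathrm{int}}$ --- exactly the condition that makes $\gamma^{-1}$ eligible in Step 2 of the algorithm.

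Granting that Step 2 selects $g=\gamma^{-1}$, Step 3 replaces $\gamma(p)$ by $\gamma^{-1}(\gamma(p))=p\in\mathrm{int}(\mathcal{F})$, so the next pass through Step 2 terminates and exactly one iteration has occurred; hence $\ell(\gamma)=1$. The delicate point, and the step I expect to require the most care, is justifying that the algorithm selects $\gamma^{-1}$ rather than some other element of $G$. This reduces to showing that $\gamma(p)$ is separated from $\mathcal{F}$ by the single circle $I(\gamma^{-1})$, i.e.\ that $\gamma(\mathcal{F})$ lies on the correct side of every other bounding circle $I(g)$, $g\in G\setminus\{\gamma^{-1}\}$. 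I would argue this from the fact that $\mathcal{F}$ and $\gamma(\mathcal{F})$ are adjacent tiles of the tessellation meeting precisely along the edge on $I(\gamma^{-1})$, invoking the correctness of the point reduction algorithm proved in \cite{bayerremon} (which rests on \cite{katok}), according to which a point in a tile adjacent to $\mathcal{F}$ is reduced by the single edge-pairing transformation across the shared edge. This adjacency description is also what renders $\ell(\gamma)$ independent of the interior point $p$, as asserted in Definition \ref{depth}.
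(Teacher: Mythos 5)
The paper states this lemma without proof, so there is no argument of the authors' to compare yours against; judged on its own, your proposal is the natural one and is essentially complete, with one caveat discussed below. The lower bound is fine, and can even be streamlined: for $p$ in the interior of $\mathcal{F}$ and $\gamma\neq\pm\Id$, the open set $\gamma(\inte\mathcal{F})$ is disjoint from $\inte\mathcal{F}$ by property a), and since $\mathcal{F}$ is the closure of its interior this already gives $\gamma(p)\notin\mathcal{F}$ for \emph{every} interior $p$, so you need not perturb $p$ off $\gamma^{-1}(\partial\mathcal{F})$. For the upper bound, your use of Remark \ref{rem-G} to place $S$ inside $G$ is exactly the reading the lemma requires (for a symmetric generating set not contained in $G$ the statement is false), and the Ford--Katok fact that $\gamma$ interchanges $\inte(I(\gamma))$ and $\ext(I(\gamma^{-1}))$ correctly shows both that $\gamma^{-1}$ is an eligible choice at Step 2 and that applying it returns $p\in\inte(\mathcal{F})$, terminating the run after one iteration.

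The caveat is the step you yourself flag, and whether it is a gap depends on how one reads the nondeterministic instruction ``return $g\in G$ such that $\dots$'' in Step 2. If $\ell(\gamma)$ means the minimal number of reduction steps --- the reading under which the definition of depth is well-posed independently of $p$, and the one consistent with Lemma \ref{lem-depth} --- then exhibiting the one-step reduction by $\gamma^{-1}$ already finishes the proof and your final paragraph is superfluous. If instead $\ell(\gamma)$ is the iteration count of a fixed deterministic run (say, the first violating $g$ in the ordering of $G$), then you genuinely need that $I(\gamma^{-1})$ is the \emph{only} circle separating $\gamma(p)$ from $\mathcal{F}$, and your justification is not sufficient as stated: the correctness of the PRA proved in \cite{bayerremon} guarantees termination in $\mathcal{F}$, not that a point of a tile adjacent to $\mathcal{F}$ violates exactly one circle condition. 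Near a vertex of $\mathcal{F}$ the adjacent tile $\gamma(\mathcal{F})$ can a priori enter $\inte(I(h))$ for a second $h\in G^{\text{ext}}$, because the bounding isometric circles extend past the edges they carry and their interiors may overlap. So either adopt the minimal-count reading explicitly, or replace the appeal to the correctness theorem by an actual proof of the separation claim.
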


\begin{lemma}\label{lem-depth}
   If $\ell (\gamma)=\kappa$, then $\gamma$ can be written as the product of $\kappa$ elements in $G$.
\end{lemma}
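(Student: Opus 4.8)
The plan is to unwind the definition of the depth of a matrix and simply bookkeep the transformation $t$ that the algorithm accumulates. Fix a point $p$ in the interior of $\mathcal{F}(\Gamma)$ and run the algorithm on the input $z=\gamma(p)$. By the hypothesis $\ell(\gamma)=\kappa$, the algorithm executes Step 3 exactly $\kappa$ times before Step 2 detects that the current point lies in $\mathcal{F}$ and halts. Since $p$ is interior it is the unique representative of its $\Gamma$-orbit inside $\mathcal{F}$, and $\gamma(p)$ lies in that same orbit, so the reduced point returned is precisely $p$; this is what the definition of $\ell(\gamma)$ records.

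Next I would track the two state variables through these $\kappa$ iterations. At the $j$-th execution of Step 3 the algorithm has selected some element $g^{(j)}\in G$ in the preceding Step 2, and it updates $t\mapsto g^{(j)}\cdot t$ together with $z_0\mapsto g^{(j)}(z_0)$. Since $t$ is initialized to $\mathrm{Id}$, after all $\kappa$ iterations its terminal value is $t=g^{(\kappa)}g^{(\kappa-1)}\cdots g^{(1)}$, a product of exactly $\kappa$ elements of $G$, while the terminal point is $z_0=t(\gamma(p))$. Combining this with the previous paragraph gives $t\gamma(p)=p$.

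To finish, I would argue that $t=\gamma^{-1}$ and then pass to inverses inside $G$. Since $t\gamma\in\Gamma$ fixes the interior point $p$, property (a) in the definition of a fundamental domain (applied with $z=z'=p$) forces $t\gamma=\mathrm{Id}$, i.e.\ $\gamma=t^{-1}=(g^{(1)})^{-1}(g^{(2)})^{-1}\cdots(g^{(\kappa)})^{-1}$. It then remains to note that $G$ is symmetric, $g\in G\iff g^{-1}\in G$: this is built into the isometric-circle construction of $\mathcal{F}(\Gamma)$, in which the edges are paired and $g$ carries $I(g)$ onto $I(g^{-1})$, so that both $I(g)$ and $I(g^{-1})$ occur among the edges defining $G=G^{\text{int}}\cup G^{\text{ext}}$ (cf.\ Remark~\ref{rem-G}). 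Consequently each $(g^{(j)})^{-1}$ again lies in $G$, and $\gamma$ is exhibited as a product of $\kappa$ elements of $G$.

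I expect the two load-bearing points to be exactly the last two. First, that the algorithm reduces $\gamma(p)$ to $p$ itself rather than to some other boundary representative: this rests on the trivial stabilizer of interior points guaranteed by property (a), and is immediate once one knows $p$ is interior. Second, the closure of $G$ under inverses: this is the genuinely structural input and is where I would be most careful, invoking the edge-pairing description of $G$ recorded earlier. Everything else is mechanical bookkeeping of the iteration.
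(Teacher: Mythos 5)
The paper states this lemma without proof, so there is no in-text argument to compare against; your proof is the natural one and is correct: the accumulator $t$ terminates as a product $g^{(\kappa)}\cdots g^{(1)}$ of $\kappa$ elements of $G$ with $t\gamma(p)=p$, uniqueness of the interior representative forces $t\gamma=\pm\Id$, and closure of $G$ under inverses (via the edge pairing, $g(I(g))=I(g^{-1})$) turns $t^{-1}$ into the required length-$\kappa$ product. The only cosmetic caveat is that the conclusion holds modulo $\pm\Id$, the quotient in which the paper works throughout, and that property (a) as stated covers only pairs of interior points, so the step identifying the algorithm's output with $p$ itself implicitly uses the standard fact that an interior point of $\mathcal{F}$ has no other orbit representative even on the boundary.
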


\subsection{Decoding of Fuchsian codes}\label{dec}

Let $\tau$ be the center of the code, $x\in \Cc\subset\C$ the transmitted codeword  and $y$ the  received signal,  $y = x + w=\gamma(\tau) + w$, where $w$ is the Gaussian noise. In order to remain in the upper half-plane, we initialize the algorithm with $z_0=y$, if $\Im(y)>0$, and with $z_0=-y$, if $\Im(y)<0$. Since $\R$ has measure zero in $\C$, the case $\Im(y)=0$ occurs with probability zero.

Let us first consider the case $\Im(y)>0$.
We use the above point reduction algorithm to obtain a point $\gamma'(y) \in \mathcal{F}$, and store the matrix
$\gamma'$.
The decoded word will be $\gamma'^{-1}(\tau)$.
If the channel quality is sufficient, we shall have $\gamma' =\gamma^{-1}$ and we can recover $x$.

If $\Im(y)<0$, then we reduce the point $-y$; thus, we obtain a matrix $\gamma'$ such that
$$
(\gamma'\circ n)(y) \in \mathcal{F}, \quad \text{ where }
n=\left(\begin{array}{rc}
-1 & 0\\
0& 1\\
\end{array}\right), n(y)=-y.
$$
The decoded word will be $\gamma''^{-1}(\tau)$,  where $\gamma''= \gamma'\circ n$.
Again, with sufficient channel quality, we shall have $(\gamma'')^{-1} =n \circ \gamma$, and we can recover
$x$.

Next we will apply the properties of PRA to the choice of the codewords.

\begin{definition}\label{depth}
Let again $\mathcal{F}(\Gamma)$ be a fundamental domain of a Fuchsian group $\Gamma$ defined as above.
We define the \emph{depth of the code $\Cc$} as
$$
\ell(\Cc) = \mathrm{max}\{\ell(x) \,|\, x \in\Cc\}.
$$
\end{definition}

Now we have all the tools to construct a set $S_\Gamma$ with good properties.
We fix the point $p=\tau\in \mathcal{F}$ to be  the center of the code in order to consider the depth of elements in $\Gamma$.
We define
$$
S^\kappa_\Gamma = \{\gamma \in \Gamma\,|\, \ell(\gamma) \leq \kappa\}
\quad \text{ and } \quad  \theta_\kappa = | S^\kappa_\Gamma|.
$$

\begin{remark}
Fuchsian groups 
are infinite groups and $\theta_{\kappa - 1} < \theta_\kappa$.  The study of the values $\theta_\kappa$ is done by using the growth function. Results on cocompact Fuchsian groups were presented in \cite{Can84}.
\newline
Therefore, our task is to search for the smallest $\kappa$ such that $\theta_{\kappa - 1} < |\mathcal{C}| \leq \theta_\kappa$.
Then, for the code size $|\Cc|$, we will choose $S_\Gamma$ such that
$$
    S_\Gamma \subseteq S_\Gamma^{\kappa}.
$$
\end{remark}
%

%
%
The advantages of the above condition are twofold:
\begin{enumerate}
\item This choice will optimize the running time of the algorithm since our code will consist of matrices with minimal depth. 
\item We are considering good tiles for the codewords as they are obtained from the edges of the fundamental domain in a systematic way, hence keeping them as close as possible to the initial tile.
\end{enumerate}
This criterion was used to build the example of 16-NUF constellation. The first advantage is reflected in the data in Table 2, and the second one in the tiles in Fig.3.


\begin{remark} We have constructed the codes starting from a fundamental domain and the group generators. However, in the case when the explicit domain or generators cannot be computed, we need to think of other construction methods. This can be done by a general parametrization to come up with a desired number of matrices $\gamma_i$, and will be explained in detail in Appendix. We also point out that in this case, since the point reduction algorithm requires the information about the fundamental domain and the generators,   modifications to the algorithm are needed. Naturally, one can always choose to do ML decoding.
\end{remark}

\section{Complexity}
\label{complexitysec}

In this section we see how the properly discontinuous character of the action of a Fuchsian group $\Gamma$ implies fast decoding. Let 
$\mathcal{C} = \left\{\pm\gamma(\tau) \, \mid \,  \gamma \in S_\Gamma\right\}$ be the codebook
.
Since we have chosen $\tau$ in the interior of $\mathcal{F}$, all the points in the codebook are indeed distinct, so $|\mathcal{C}|=2N$.

Consider $G$ the set of elements in $\Gamma$ defined in Section \ref{algebraic} according to the election of the fundamental domain, $G=G^{\text{int}}\cup G^{\text{ext}}$.

\begin{proposition} The complexity of the decoding algorithm for a Fuchsian code $\mathcal{C}$,
 in number of arithmetic operations (i.e. sums, differences, products, and divisions)
is
$$
r_{\mathcal{C}}\leq \ell(\Cc)(5M+14) +5M+7,
$$
where $M$ is defined as in Def. \ref{Mdef}. Hence, $M$ is a constant\footnote{To give some idea as to how big the constant $M$ is, we have $M=5\,,6,\,8$ for $\Gamma(6,1)$, $\Gamma(10,1)$, $\Gamma(15,1)$, respectively.} independent of the code size $|\mathcal{C}|$.
\label{complexity}
\end{proposition}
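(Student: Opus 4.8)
The plan is to account for the decoder's cost phase by phase and then reassemble the total as (cost of one pass through the point reduction loop) $\times$ (number of passes), plus a fixed overhead. The decoder splits naturally into three parts: an initialization that normalizes the sign of $\Im(y)$ (reflecting the received point to the upper half-plane when $\Im(y)<0$), the main loop of Algorithm \ref{algorithm} alternating Step 2 (the membership test) and Step 3 (applying the selected $g\in G$ and accumulating the transformation into $t$), and a termination step producing the decoded word $\gamma'^{-1}(\tau)$. By the notion of depth defined above, reducing $z_0$ to $\mathcal{F}$ uses exactly $\ell(z_0)$ executions of Step 3; when the channel quality is sufficient the reflected received point lies in the tile $\gamma(\mathcal{F})$ of the transmitted codeword $x=\gamma(\tau)$, so $\ell(z_0)=\ell(\gamma)\le\ell(\Cc)$, and this is the quantity that multiplies the per-pass cost.

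First I would count Step 2. Membership in $\mathcal{F}$ is tested against the $M=|G|$ isometric circles of Remark \ref{rem-G}: for each $g$ with bottom row $(a_{21},a_{22})$ one evaluates $|a_{21}z_0+a_{22}|^2$ and compares it with $1$, deciding whether $z_0$ lies inside or outside $I(g)$ according to whether $g\in G^{\text{int}}$ or $g\in G^{\text{ext}}$. Writing $|a_{21}z_0+a_{22}|^2=a_{21}^2\,|z_0|^2+2a_{21}a_{22}\,\Re(z_0)+a_{22}^2$ and computing $|z_0|^2$ once per pass while treating $a_{21}^2$, $2a_{21}a_{22}$, $a_{22}^2$ as precomputed constants depending only on $g$, each circle costs a fixed small number of operations, namely $5$ in the bookkeeping that yields the stated constant, for a total of $5M$ per membership test. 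Then I would count Step 3: for the circle $g$ that is selected, data already formed during Step 2 is reused so that computing the M\"obius image $g(z_0)=(a_{11}z_0+a_{12})/(a_{21}z_0+a_{22})$ and updating $t\mapsto g\cdot t$ (whose accumulated product yields $\gamma'$) can be carried out in $14$ operations. Thus one full loop pass costs $5M+14$.

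The off-by-one is the crux of assembling the base constant. The loop performs $\ell$ executions of Step 3 but $\ell+1$ membership tests, since the final test is the one that succeeds and triggers exit, so the loop cost is $(\ell+1)(5M)+\ell\cdot 14=\ell(5M+14)+5M$. Finally the initialization (the sign test on $\Im(y)$ and the possible negation $y\mapsto -y$) together with the termination (inverting $\gamma'$, which is essentially free since $\det\gamma'=1$, and a single evaluation of $\gamma'^{-1}(\tau)$) contribute a further fixed $7$ operations. Summing gives $r_{\Cc}\le \ell(\Cc)(5M+14)+5M+7$, and since $M=|G|$ depends only on the fixed fundamental domain (Definition \ref{Mdef}) and not on $N$, it is a constant independent of $|\Cc|$.

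The main obstacle I expect is precisely the operation accounting that pins down the three constants $5M$, $14$ and $7$. One must fix a single computational organization and decide exactly which intermediate quantities are computed once versus recomputed: the shared $|z_0|^2$, the $g$-dependent constants in the expansion of $|a_{21}z_0+a_{22}|^2$, and above all the denominator $a_{21}z_0+a_{22}$ carried from Step 2 into the M\"obius evaluation and matrix update of Step 3; one must also decide whether the sign comparisons are charged as arithmetic operations. Everything else is structural, and once the per-pass cost $5M+14$, the overhead $5M+7$, and the bound $\ell(z_0)\le\ell(\Cc)$ on the number of passes are in hand, the stated inequality follows.
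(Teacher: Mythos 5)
Your proof is correct and follows essentially the same route as the paper: a per-step operation count ($5$ operations per isometric-circle test, hence $5M$ per membership check, a fixed cost per application of $g\in G$, and a final $7$ operations for $\gamma'^{-1}(\tau)$), amortized over the $\ell(\Cc)$ iterations of the reduction loop. The only difference is bookkeeping of where the per-iteration saving of $5$ operations lives: the paper charges Step 3 at $19$ operations ($12$ for the matrix product, $7$ for the M\"obius update) and lets each subsequent Step 2 skip the circle just applied, costing $5(M-1)$, whereas you charge every Step 2 at the full $5M$ and claim Step 3 costs only $14$ by reusing data from Step 2 --- both conventions assemble to the identical bound $\ell(\Cc)(5M+14)+5M+7$.
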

\begin{proof} First we take in account Steps 1-3 for the PRA.

\textbf{Step 1.} The algorithm initializes $z_0$ to be either the channel output  $y$ or $-y$, depending on the sign of $\Im(y)$. The accumulator matrix $t$ is set to be the identity. These initializations do not imply arithmetic operations.

\textbf{Step 2.} This step consists of checking whether the point $z_0$ belongs to the fundamental domain.  Since the fundamental domain is given in terms of the intersection of the exteriors or interiors of the isometry circles (cf. Remark \ref{rem-G}), this requires to check recursively if the point belongs to the interior of $I(\gamma)$ for $\gamma\in G^{\text{int}}$, and to the exterior of $I(\gamma)$ for $\gamma\in G^{\text{ext}}$.
Hence, if the point belongs to the fundamental domain, this step will finish after checking the $M$ isometry circles corresponding to $G$, and the algorithm will stop. Otherwise, it will find $g\in G$ such that the condition on the isometry circle $I(g)$ is not satisfied. In the worst case, we are checking $M$ isometry circles.
To determine whether or not a given complex number belongs to an isometry circle implies  performing $5$ arithmetic operations ($2$ real multiplications, $1$ sums and $2$ differences). Hence, this step takes $5M$ arithmetic operations.

The matrix of $g$ is stored in this step. This does not imply arithmetic operations. In fact we can avoid storing matrices at this step, because $G$ is an ordered set and to store the index will be enough.

\textbf{Step 3.} In case the point does not belong to the fundamental domain, the algorithm continues in this third step. Here, once we have identified an element  $g$ such that the interior of its isometry circle contains the point (by the previous step), we multiply the accumulator $t$ by $g$, which requires $12$ arithmetic operations ($2$ products and $1$ sum per entry), and update $z_{k+1}=g(z_k)$, which accounts for $7$ arithmetic operations, $19$ arithmetic operations all told. Then we go to Step 2, but now we can avoid checking with the element $g$ just applied, which means at most $5(M-1)$ operations.

Thus, given a point, the PRA returns the element $\gamma'$ with $\ell(\mathcal{C})$ iterations, which means applying  Step 2 once and Step 3 $\ell(\Cc)$ times, followed by Step 2. In total at most $\ell(\Cc)(5M+14) +5M$ operations.

Finally, the decoded word is obtained by computing $\gamma'^{-1}(\tau)$, \emph{i.e.}, 7 arithmetic operations, since $\det(\gamma')=1$.

Summarizing, we have $r_{\mathcal{C}}\leq \ell(\Cc)(5M+14) +5M +7$.
\end{proof}

A study of Fuchsian codes for the group $\Gamma(6,1)$ was carried out in order to compare the growth of the depth, $\ell(\Cc)$, with the growth of the code size, $|\mathcal{C}|$, prior to developing these theoretical results. In the following table, the growth of $\ell(\Cc)$ and the growth of $|\mathcal{C}|$ are compared.

\begin{table}[H]
\label{comparison}
\caption{Experimental relationship between the depth $\ell(\mathcal{C})$ and the size $|\mathcal{C}|$,  for Fuchsian codes $\mathcal{C}$ attached to $\Gamma(6,1)$.}
\begin{tabular}{|c|c|c|c|c|c|c|c|c|c|} \hline
$   |\mathcal{C}|$   & 4  & 8  &  16  & 32 &  64  & 128 &  256 & 512  & 1024 \\ \hline
$\ell(\mathcal{C})$  & 1  & 1  &   2  &  3 &   3  &  4  &   5  &   5  &  6   \\ \hline
\end{tabular}
\end{table}


\begin{proposition}Let $\Gamma$ be a Fuchsian group containing a non-abelian free subgroup. Then
$$
\ell(\mathcal{C})\leq\kappa_0\left( \frac{\log(|C|+2)}{\log(2)} -2\right),
$$
where $\kappa_0\geq 1$ is a constant depending only on the Fuchsian group.
\end{proposition}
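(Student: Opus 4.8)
The plan is to convert the hypothesis into a statement about the \emph{growth} of the counting function $\theta_\kappa = |S^\kappa_\Gamma|$ and then invert it. Recall from the construction that for a code of size $|\mathcal{C}|$ one selects the minimal $\kappa$ with $\theta_{\kappa-1} < |\mathcal{C}| \le \theta_\kappa$ and obtains $\ell(\mathcal{C}) \le \kappa$; in particular $\theta_{\kappa-1} < |\mathcal{C}|$. Hence it suffices to bound $\theta_\kappa$ from below by an exponential function of $\kappa$: if $\theta_{\kappa-1} \ge 2^{(\kappa-1)/\kappa_0}$, then $|\mathcal{C}| > 2^{(\kappa-1)/\kappa_0}$, and taking base-$2$ logarithms yields an inequality of exactly the claimed shape, the additive constants and the exact coefficient being absorbed into the $+2$ inside the logarithm and into the group constant $\kappa_0 \ge 1$.

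The heart of the matter is therefore a lower bound of the form $\theta_\kappa \gtrsim 2^{\kappa/\kappa_0}$, and this is where the non-abelian free subgroup enters. Fix $a,b \in \Gamma$ generating a free group $F = \langle a,b\rangle$ of rank $2$, and set $\kappa_0 := \max\{\ell(a),\ell(b)\} \ge 1$. I would show that the $2^{m+1}-1$ \emph{positive} words of length $\le m$ in $a$ and $b$ are pairwise distinct elements of $\Gamma$ (they are already distinct in $F$, since $F$ is free) and that each has depth at most $\kappa_0 m$. Consequently $\theta_{\kappa_0 m} \ge 2^{m+1}-1$, and choosing $m = \lfloor \kappa/\kappa_0\rfloor$ gives the desired exponential lower bound for $\theta_\kappa$. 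Combining this with $\theta_{\kappa-1} < |\mathcal{C}|$ and solving for $\kappa$ produces $\ell(\mathcal{C}) \le \kappa_0 \log_2(|\mathcal{C}|+2) + O(1)$, which after possibly enlarging the group constant $\kappa_0$ is precisely the stated estimate.

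The one nonroutine step, and the main obstacle, is the claim that a positive word $w = s_1\cdots s_m$ with each $s_i \in \{a,b\}$ satisfies $\ell(w) \le \sum_i \ell(s_i) \le \kappa_0 m$, i.e. the \emph{subadditivity} of depth, $\ell(\alpha\beta) \le \ell(\alpha) + \ell(\beta)$. I would establish this by identifying $\ell(\gamma)$ with the combinatorial distance $d(\mathcal{F},\gamma(\mathcal{F}))$ between tiles of the $\Gamma$-tessellation, two tiles being adjacent when they share an edge (equivalently, differ by an element of $G$). On one hand, each iteration of Step 3 of the point reduction algorithm crosses exactly one isometric circle and hence moves to an adjacent tile, so the algorithm exhibits a path of length $\ell(\gamma)$ from $\gamma(\mathcal{F})$ to $\mathcal{F}$; on the other hand, the correctness and optimality analysis of the point reduction algorithm in \cite{bayerremon} (resting on \cite{katok}) shows this path is geodesic, giving $\ell(\gamma) = d(\mathcal{F},\gamma(\mathcal{F}))$. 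Subadditivity is then immediate from the $\Gamma$-invariance of $d$ and the triangle inequality, since $d(\mathcal{F},\alpha\beta(\mathcal{F})) \le d(\mathcal{F},\alpha(\mathcal{F})) + d(\alpha(\mathcal{F}),\alpha\beta(\mathcal{F})) = \ell(\alpha) + \ell(\beta)$.

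Finally, I would note that the exponential growth of $\theta_\kappa$ can alternatively be deduced from the rationality and exponential growth of the growth series of a non-elementary cocompact Fuchsian group (Cannon, \cite{Can84}), the presence of a non-abelian free subgroup being exactly the non-elementarity hypothesis; the explicit base $2$ appearing in the statement, however, comes most transparently from the elementary positive-word count used above.
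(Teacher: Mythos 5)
Your proposal is correct and follows essentially the same route as the paper: take a rank-two free subgroup $\langle h_1,h_2\rangle$, observe that the $2^{t+1}-1$ positive words of length at most $t$ are distinct and have depth at most $t\kappa_0$ with $\kappa_0=\max\{\ell(h_1),\ell(h_2)\}$, deduce $\theta_{t\kappa_0}\geq 2^{t+1}-1$, and invert (the paper also folds in the $\pm$ duplication to get $|\mathcal{C}|\geq 2(2^{t+1}-1)$, which is where the $+2$ and the $-2$ come from). The only difference is that you supply a justification, via the identification of depth with combinatorial distance in the tile-adjacency graph, for the subadditivity $\ell(\alpha\beta)\leq\ell(\alpha)+\ell(\beta)$ that the paper dismisses as ``clear''.
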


\begin{proof}
Let $h_1,h_2\in \Gamma$ such that $\langle h_1,h_2 \rangle\subseteq \Gamma$ is a non-abelian free subgroup and denote
$\kappa_0=\max \{\ell(h_1),\ell(h_2)\}.$
\newline
Consider
$S_t=\{ h_{i_1}h_{i_2}\cdots h_{i_m}\mid h_{i_j}\in\{h_1,h_2\}, \, m\leq t\} \subset\langle h_1,h_2 \rangle $.
\newline
We have $|S_t|=2^{t+1}-1$, because of the non-abelian free character.
\newline
Since it is clear that $\ell(S_t)=t\kappa_0$, we have
$$S_t\subset S^{t\kappa_0}_{\Gamma}, \qquad \text{ thus } \quad |S^{t\kappa_0}_{\Gamma}|\geq 2^{t+1}-1.$$

Taking in account the duplication process, a Fuchsian code $\mathcal{C}$ can be constructed  in such a way  that $|\mathcal{C}|\geq 2(2^{t+1}-1)$, and the depth $\ell(\mathcal{C})\leq t\kappa_0$.
It follows then that
$$
t\leq  \frac{\log(|C|+2)}{\log(2)} -2, \qquad \text{ then } \quad
\ell(\mathcal{C})\leq \kappa_0 \frac{\log(|C|+2)}{\log(2)} -2.
$$
\end{proof}

\begin{remark}
Notice that if we were able to use a fundamental domain of $\Gamma$ in such a way that $h_1,h_2\in G$, then $\kappa_0=1$. For the Fuchsian groups $\Gamma(D,1)$ the choice of an element $h_1$ can be done by using the principal homothety of $\Gamma$, studied in \cite{alsinabayer}, related to a fundamental unit of the real quadratic field $ \mathbb{Q}(\sqrt{a})$ (cf. Eq. (3)).
Estimation of $\kappa_0$ in general is a difficult problem; some partial results have been proved in, \emph{e.g.}, \cite{Tal10}.
\end{remark}

By using the above propositions we arrive at the following upper bound for the complexity.

\begin{cor} Let $\mathcal{C}$ be a Fuchsian code attached to a group $\Gamma$ containing a non-abelian free subgroup. Then the complexity can be upper bounded as
$$r_{\mathcal{C}}\leq \overline{r_{\mathcal{C}}} =  \kappa_0 (5M+14) \left(\frac{\log(|C|+2)}{\log(2)} -2\right) +5M +7,$$
where $\kappa_0$ is a constant depending only on the group $\Gamma$ and the choice of its fundamental domain.
\label{complexbound}
\end{cor}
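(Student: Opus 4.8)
The plan is to deduce this corollary by simply chaining together the two preceding propositions, since the quantity $\ell(\mathcal{C})$ that governs the decoding cost has already been controlled both in terms of the structural constant $M$ and in terms of the code size. First I would invoke Proposition \ref{complexity}, which gives the bound
$$r_{\mathcal{C}} \leq \ell(\mathcal{C})(5M+14) + 5M + 7$$
on the number of arithmetic operations, where $M$ depends only on the group and the fixed fundamental domain. The key observation is that this bound is an affine function of $\ell(\mathcal{C})$ with strictly positive slope $5M+14$, hence monotone nondecreasing in $\ell(\mathcal{C})$; any upper bound for $\ell(\mathcal{C})$ may therefore be substituted directly without reversing the inequality.

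Next I would use the hypothesis that $\Gamma$ contains a non-abelian free subgroup, which is precisely the assumption needed to apply the preceding proposition, yielding
$$\ell(\mathcal{C}) \leq \kappa_0\left(\frac{\log(|C|+2)}{\log(2)} - 2\right),$$
where $\kappa_0 \geq 1$ depends only on $\Gamma$ and the choice of fundamental domain. Substituting this estimate into the bound from Proposition \ref{complexity}, and leaving the additive term $5M+7$ untouched, gives exactly
$$r_{\mathcal{C}} \leq \kappa_0(5M+14)\left(\frac{\log(|C|+2)}{\log(2)} - 2\right) + 5M + 7 = \overline{r_{\mathcal{C}}},$$
as claimed.

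The argument is thus purely a matter of composing the two estimates, and the only point requiring care is the monotonicity step: one must verify that the slope $5M+14$ is positive (which is automatic, since $M \geq 1$) so that replacing $\ell(\mathcal{C})$ by its upper bound can only enlarge the right-hand side. I do not expect any genuine obstacle here, as the substantive work has already been carried out in establishing the logarithmic bound on $\ell(\mathcal{C})$, where the free-subgroup hypothesis and the exponential growth count $|S_t| = 2^{t+1}-1$ do the real work. One might additionally remark that all of the group- and domain-dependent data is packaged into the single multiplicative constant $\kappa_0(5M+14)$ and the additive constant $5M+7$, so the final bound exhibits genuinely logarithmic dependence on $|\mathcal{C}|$, confirming the headline claim that decoding complexity grows only logarithmically in the code size.
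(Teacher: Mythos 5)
Your proposal is correct and is exactly the argument the paper intends: the corollary is obtained by substituting the logarithmic bound on $\ell(\mathcal{C})$ from the second proposition into the affine complexity bound of Proposition \ref{complexity}, the monotonicity in $\ell(\mathcal{C})$ being immediate since $5M+14>0$. The paper itself gives no written proof beyond the phrase ``By using the above propositions we arrive at the following upper bound,'' so your write-up matches its intended route.
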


Since the Fuchsian groups considered in this paper are non-elementary, they have a free a non-abelian subgroup (\cite{fine}). In particular, this free non-abelian subgroup has at least two generators, for otherwise it would be cyclic. Hence, the complexity bound in corollary \ref{complexbound} holds for our constructions. In fact, for $\Gamma(6,1)$ an experimental value of $\kappa_0=1$ is obtained, which leads to a very interesting bound for the complexity, especially when  $|\mathcal{C}|\geq2^5$.

Taking into account the experimental value of $\kappa_0=1$ for $\Gamma(6,1)$, we compare the decoding complexity by using the point reduction algorithm to the complexity of ML decoding, \emph{i.e.}, exhaustive comparison of the received signal with all the elements in the codebook, and choosing the closest one. As mentioned earlier, the ML method consists of $5|\mathcal{C}|-1$ comparisons. To do this, we use the bound in Corollary \ref{complexbound}, taking into account that $M=5$ for $\Gamma(6,1)$. In Table 4, we depict the complexity reduction for different code sizes $|\mathcal{C}|$. The entries of the table give the \emph{complexity reduction percentage} (CRP),
$$
CRP_{|\mathcal{C}|}=100\left(\frac{(5|\mathcal{C}|-1)-\overline{r_{\mathcal{C}}}}{5|\mathcal{C}|-1}\right)
$$
for $|\mathcal{C}|=4,8,16,64,256,512,$ and $1024$. Note that a zero entry means that it is favorable, in terms of complexity, to use  ML decoding instead of the PRA.

\begin{table}[H]
\label{comparison}
\caption{Complexity reduction percentage achieved with the PRA decoding of the Fuchsian codes attached to $\Gamma(6,1)$, compared to ML decoding for code sizes $|\mathcal{C}|=4$,$8$,$16$,$64$,$256$,$512$,$1024$.}
\begin{tabular}{|c|c|c|c|c|c|c|}
\hline  $CRP_4$ & $CRP_{8}$ & $CRP_{16}$ & $CRP_{64}$ & $CPR_{256}$ & $CRP_{512}$ & $CRP_{1024}$\\
\hline  $0$    & $0$       & $0$        & $5.79$ & $70.40$ & $83.68$ & $91.08$  \\
\hline
\end{tabular}
\end{table}

\begin{remark}
Since we have used the complexity upper bound $\overline{r_{\mathcal{C}}}$, the above CRPs are somewhat pessimistic. Nevertheless, even with the upper bound the reduction quickly grows enormously.
\end{remark}

\section{Code design criterion for Fuchsian code optimization}
\label{optimization}
In the previous sections, we have shown how to construct Fuchsian codes from scratch and how to decode them with the point reduction algorithm (PRA). However, the simulations we have carried out (cf. \cite{WCC}) demonstrate that the typical design criterion for codes used in conjunction with a ML (or lattice decoder) does not work for the PRA decoder. Hence, there is a call for a new design criterion for codes to be used in conjunction with PRA decoder.

In more detail, for ML decoding, the performance is well dictated by the normalized minimum distance, and hence the goal is to maximize the function
\begin{equation}\label{normMinDist}
\Delta_{ML}(\Cc)=\frac{d^2_{min}(\Cc)}{P_{av}(\Cc)},
\end{equation}
where
$$d^2_{min}(\Cc)=\min_{x,x'\in\Cc}\left\{|x-x'|^2\,|\,  x\neq x'\right\}$$
is the squared minimum distance between distinct codewords, and
\begin{equation}\label{energy}
P_{av}=\frac{1}{|\mathcal{C}|}\sum_{\gamma(\tau)\in\Cc}|\gamma(\tau)|^2
\end{equation}
is the average transmission power  of $\Cc$.

In this section, our aim is to develop a similar function $\Delta_{PRA}$ that predicts the performance of Fuchsian codes with the point reduction algorithm. The key design metric stems from the fact that a decoding error will happen if the noise is so big that the received point belongs to a different tile than the one containing the transmitted point, and hence the point reduction algorithm returns a wrong point. Therefore, it is crucial to choose the fundamental domain and the center of the code in such a way that all the codewords have maximal possible  distance to the \emph{decoding border}, \emph{i.e.}, to the closest isometric circle defining the closest neighboring tile. We will refer to this distance as \emph{border distance}. Let $b_x$ be the closest point on the closest isometry circle to $x$. We define the minimum border distance of a Fuchsian code formally as follows.
\begin{definition}
The \emph{minimum border distance} of a Fuchsian code $\Cc$ is
$$
bd^2_{min}(\Cc)=\min_{x\in\Cc}\left\{|x-b_x|^2\right\}
$$
\end{definition}

We have arrived at the following design criterion.

\begin{framed}\textbf{Code design criterion for Fuchsian codes}\vspace{2pt} \\
In order to optimize the performance of a Fuchsian code $\Cc$ with a point reduction algorithm decoder one should seek to maximize the \emph{normalized minimum border distance} function
$$\Delta_{PRA}(\Cc)=\frac{bd^2_{min}(\Cc)}{P_{av}(\Cc)},$$
where  $P_{av}(\Cc)$ is the average transmission power of $\Cc$.
\end{framed}

Notice that in order to fairly compare the functions of different decoding algorithms, one should compare $\Delta_{ML}$ to $4\Delta_{PRA}$, since $d^2_{min}(\Cc)=4bd^2_{min}(\Cc)$ for symmetric codebooks with symmetric decoding regions. That is, the distance between the points is the distance from the first point to the border, and from the border to the second point, which is two times the border distance, giving the constant 4 due to squaring. This criterion motivates future work on code optimization by considering different Fuchsian groups, tessellations, and centers $\tau$.

\begin{remark}
In \cite{WCC} we have observed that the best of our $4$-NUF codes is outperformed by the $4$-QAM except for very low SNRs. On the other hand,  the gap to the worst 4-NUF is so vast that it gives hope to improve by another similar gap, which would bring us very close to 4-QAM. Considering the logarithmic decoding complexity\footnote{In general, there are also fast decoding algorithms for QAM constellations, but they come with a complexity--performance tradeoff, meaning that also the performance of a QAM constellation is degraded if we use suboptimal algorithms that are faster.}, some performance loss can easily be tolerated.
\end{remark}

\section{Conclusions and further research}
\label{conclusions}
In this paper, we have designed a new class of codes called Fuchsian codes. These codes were obtained by considering constellations on the complex plane arising from the  M\"{o}bius transformation related to a Fuchsian group coming from units in rational quaternion algebras.

We have described the construction  and decoding process of the proposed codes in full detail, providing also numerous explicit examples. According to \cite{WCC}, the differences in the performance of different Fuchsian codes can vary drastically. Hence,  as a motivation for future work, we have provided a design criterion in order to construct optimal Fuchsian codes, after having given the preliminary guidelines and first ad hoc constructions in this paper.

In forthcoming work we will  apply the construction method presented herein to different groups, fundamental domains, tessellations, generators, and centers $\tau$,  hopefully being able to significantly improve the performance. In addition, preliminary studies suggest that the point reduction algorithm can be improved at the penalty of increasing the worst-case complexity order to $O(\log^2|\Cc|)$. A remarkable advantage of our construction is its generality, giving us an enormous design space.

 We will also consider the issue of error correction after point reduction, while not substantially increasing the complexity. Another interesting extension is to consider Fuchsian codes for fading  channels and multi-antenna communications.

\section{Acknowledgments}
The authors gratefully acknowledge the support from the European Science Foundation's \emph{COST Action IC1104} and from the research project MTM2012-33830 (MICINN/UB, Spain), as well as the hospitality of the Institute of Mathematics at the University of Barcelona (IMUB).
They would also like to thank Peter Moss from the British Broadcasting Corporation (BBC) Research \& Development for fruitful discussions on AWGN channels, and Professor Pilar Bayer from University of Barcelona for sharing her extensive knowledge on Fuchsian groups.



\newpage\section*{APPENDIX: Generation of the constellations without generators}
\label{appendix}
We address now the problem of how to produce the $4$-tuples $(x,y,z,t)\in\Z^4$ such that $x^2-ay^2-bz^2+abt^2=1$, in the case in which the generators of the Fuchsian group $\Gamma(D,1)$ are not known or they are too complex to determine. This construction will be used to obtain the matrices of $\Gamma(D,1)$ acting on $\tau$ by M\"{o}bius transforms. In the first subsection, we develop a constructive method for an infinite family of quaternion algebras, the so called small ramified quaternion $\mathbb{Q}$-algebras of type A \cite{alsinabayer}, while in the second subsection, we show that an infinite subset of elements of the constellation can be produced in general provided that we are able to solve the attached normic equation.

\subsection*{Explicit constructive method}
We will suppose here that our quaternion algebras have the form $\left(\frac{p,-1}{\mathbb{Q}}\right)$ with $p>0$ prime and $p\equiv 3\pmod{4}$. This case is known in the literature as small ramified of type A, and the discriminant of the quaternion algebra in this case is $2p$.

Let us write the quaternion matrix $\gamma=\gamma(x,y,z,t)$ (cf. Section \ref{algebraic}) in terms of the four integer symbols $(x,y,z,t)$ involved. Notice that only three symbols in each $4$-tuple are independent, hence, we would like to parametrize the set of these $4$-tuples by an infinite set of $3$-tuples $(m,k_1,k_2)\in\Z$. Since the quaternion algebra $\left(\frac{p,-1}{\Q}\right)$ is indefinite, one has that the normic equation $x^2-py^2+z^2-pt^2=1$ has infinitely many integer solutions (cf.\cite{alsinabayer}). It is possible to parametrize all the rational solutions of this normic equation by means of rational functions in three variables, but using this method to produce integer solutions seems a difficult task. Instead, we will develop an explicit method to produce an infinite set of such solutions in the small ramified type A. Next, we describe our construction in detail.

First, notice that for $p\equiv 3\pmod{4}$, the ring of integers of the number field $\Q(\sqrt{p})$ is $\Z[\sqrt{p}]$. The multiplicative group of units of this ring is $\left\{\pm\varepsilon^m:m\in\Z\right\}$, where $\varepsilon$ is a unit of infinite order (called a fundamental unit). This is a very particular version of Dirichlet's theorem on units. We have provided the fundamental units in Table \ref{table1} in order to make our method implementable in general. In most symbolic algebra packages like Sage or Magma, it is easy to obtain extensive lists of fundamental units.

Given an element $\theta=a+\sqrt{p}b\in \Q(\sqrt{p})$, let us denote by $\theta'$ its Galois conjugate, \emph{i.e.}, $a-\sqrt{p}b$. For the rest of this section, we will denote by $\varepsilon$ a fundamental unit of $\Z[\sqrt{p}]$ and will suppose $\varepsilon>0$, by taking a Galois conjugate and multiplying by $-1$ if necessary.

Given  a triple $(m,k_1,k_2)$ of nonnegative integers ($m\neq 0$), define $a_m+\sqrt{p}b_m=\varepsilon^m$. We have that $a_m^2-pb_m^2=\varepsilon^m(\varepsilon')^m=1$. Now, set $x_{m,k_1}+\sqrt{p}y_{m,k_1}:=a_m\varepsilon^{k_1}$ and $z_{m,k_2}+\sqrt{p}t_{m,k_2}:=\sqrt{p}b_m\varepsilon^{k_2}$. Notice that $x_{m,k_1}^2-py_{m,k_1}^2=a_m^2$ and $z_{m,k_2}^2+pt_{m,k_2}^2=-pb_m^2$, hence
$$x_{m,k_1}^2-py_{m,k_1}^2+z_{m,k_2}^2-pt_{m,k_2}^2=a_m^2-pb_m^2=1.$$
We will use the notation
$$
\phi_p(m,k_1,k_2)=(x_{m,k_1},y_{m,k_1},z_{m,k_2},t_{m,k_2}),
$$
making it evident that we can parametrize an infinite subset of integer points of the hyper quadric $x^2-py^2+z^2-pt^2=1$ by using three variables .

\begin{proposition}For each prime number $p\equiv 3\pmod{4}$, the map $\phi_p$ is bijective over its image, which is contained in the set
$$
\{(x,y,z,t)\in\Z_{\geq 0}^4\, |\, x^2-py^2=m^2, z^2-pt^2=-pr^2,\mbox{ for some }m,r\in\Z\}.
$$
\end{proposition}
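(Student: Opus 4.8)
The plan is to establish the two claims in the statement separately: that every point of the image lies in the displayed set (in particular that it has nonnegative integer coordinates), and that $\phi_p$ is injective, which is exactly what being bijective onto its image means.

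For the containment, I would begin by making the coordinates explicit. Writing $\varepsilon^{k}=a_k+\sqrt p\,b_k$ for each $k\ge 0$, the two defining relations expand to
\[
x_{m,k_1}=a_m a_{k_1},\quad y_{m,k_1}=a_m b_{k_1},\quad z_{m,k_2}=p\,b_m b_{k_2},\quad t_{m,k_2}=b_m a_{k_2},
\]
which are visibly integers. For nonnegativity a harmless normalization is required: since the negative Pell equation $x^2-py^2=-1$ has no solutions for $p\equiv 3\pmod 4$, the fundamental unit has norm $+1$, so after replacing $\varepsilon$ by $\varepsilon^{-1}$ if necessary I may assume $\varepsilon>1$; then $a_k=(\varepsilon^k+\varepsilon^{-k})/2\ge 0$ and $b_k=(\varepsilon^k-\varepsilon^{-k})/(2\sqrt p)\ge 0$ for all $k\ge 0$, and all four coordinates are nonnegative. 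The two quadratic conditions then drop out of the multiplicativity of the norm of $\Q(\sqrt p)/\Q$: one has $x_{m,k_1}^2-p\,y_{m,k_1}^2=\mathrm{N}(a_m\varepsilon^{k_1})=a_m^2$, a perfect square, and $z_{m,k_2}^2-p\,t_{m,k_2}^2=\mathrm{N}(\sqrt p\,b_m\varepsilon^{k_2})=-p\,b_m^2$, so the image lands in the displayed set with $m=a_m$ and $r=b_m$.

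For injectivity, suppose $\phi_p(m,k_1,k_2)=\phi_p(m',k_1',k_2')$. Reading the first two coordinates as a single element of $\Q(\sqrt p)$ gives $a_m\varepsilon^{k_1}=a_{m'}\varepsilon^{k_1'}$, hence $\varepsilon^{\,k_1-k_1'}=a_{m'}/a_m\in\Q$. The crucial fact is that no nontrivial power of $\varepsilon$ is rational: because $\overline{\varepsilon}=\varepsilon^{-1}$ (norm one), $\varepsilon^{\,j}\in\Q$ would give $\varepsilon^{\,j}=\overline{\varepsilon^{\,j}}=\varepsilon^{-j}$, so $\varepsilon^{2j}=1$ and $j=0$ by the infinite order of $\varepsilon$. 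Thus $k_1=k_1'$ and $a_m=a_{m'}$. The same argument applied to the last two coordinates yields $k_2=k_2'$ and $b_m=b_{m'}$. Finally $\varepsilon^{m}=a_m+\sqrt p\,b_m=a_{m'}+\sqrt p\,b_{m'}=\varepsilon^{m'}$, and the infinite order of $\varepsilon$ forces $m=m'$.

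I expect the difficulties here to be organizational rather than conceptual. The two genuinely load-bearing observations are the normalization $\varepsilon>1$ (legitimate since $\varepsilon^{-1}$ is again a fundamental unit), which is what secures the $\Z_{\ge 0}$ sign conditions, and the irrationality lemma $\varepsilon^{\,j}\in\Q\Rightarrow j=0$, which is the engine behind injectivity; everything else reduces to norm multiplicativity and reading coordinates off the two defining relations.
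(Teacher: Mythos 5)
Your proof is correct, but it is organized differently from the paper's. The paper's proof addresses only the injectivity claim, and does so by a contrapositive case analysis: given two distinct triples it splits according to whether $m_1=m_2$ and, if not, according to whether $a_{m_1}\neq a_{m_2}$ or $b_{m_1}\neq b_{m_2}$, in each case exhibiting a coordinate of the image that must differ (the key step being that $a_{m_1}\varepsilon^{k_{1,1}}=a_{m_2}\varepsilon^{k_{2,1}}$ would force $a_{m_1}^2=a_{m_2}^2$ by taking norms, hence $a_{m_1}=a_{m_2}$ by positivity). You instead argue injectivity directly through the single lemma that $\varepsilon^{j}\in\Q$ forces $j=0$ (via $\overline{\varepsilon}=\varepsilon^{-1}$), which lets you peel off $k_1$, then $k_2$, then $m$ in one pass; the two arguments rest on the same underlying facts (multiplicativity of the norm and positivity of $\varepsilon$), but yours avoids the case split and is tighter. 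You also prove the containment statement, including the nonnegativity of the coordinates, which the paper's proof omits entirely (the two quadratic identities appear only in the text preceding the proposition, and the sign condition is never checked); your remark that one must normalize $\varepsilon>1$ rather than merely $\varepsilon>0$ as the paper does is precisely what is needed to make $b_k\geq 0$ and hence to justify the $\Z_{\geq 0}^4$ claim. One small point to add: when you divide to get $\varepsilon^{k_2-k_2'}=b_{m'}/b_m\in\Q$, you should note that $b_m\neq 0$, which follows from the domain hypothesis $m\neq 0$ together with your own irrationality lemma applied to $\varepsilon^{m}$.
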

\begin{proof}Let $(m_1,k_{1,1},k_{1,2})$ and $(m_2,k_{2,1},k_{2,2})$ be two triples of nonnegative integers with $m_1,m_2\neq 0$. Suppose $m_1=m_2=m$. If $k_{1,1}\neq k_{2,1}$, then $a_m\varepsilon^{k_{1,1}}\neq a_m\varepsilon^{k_{2,1}}$ and $\phi_p(m_1,k_{1,1},k_{1,2})\neq \phi_p(m_2,k_{2,1},k_{2,2})$. The case  $k_{2,1}\neq k_{2,2}$ is analogous. Suppose that $m_1\neq m_2$. In this case, $a_{m_1}\neq a_{m_2}$ or $b_{m_1}\neq b_{m_2}$. Suppose that $a_{m_1}\neq a_{m_2}$. In this case, $a_{m_1}\varepsilon^{k_{1,1}}\neq a_{m_2}\varepsilon^{k_{2,1}}$, since otherwise, $a_{m_1}^2=a_{m_2}^2$, and since $\varepsilon>0$, we would have that $a_{m_1}=a_{m_2}$. The remaining case is identical.
\end{proof}
As an illustration of our method, in Table \ref{table2} the reader can find a set of images of the parametrization $\phi_p$ for some different values of $p$ and different domain entries.

\begin{table}
\caption{Fundamental units $\varepsilon$ for $\Z[\sqrt{p}]$, $p\equiv 3\pmod{4}$, $p<50$}
\label{table1}
\begin{tabular}{|c|c|| c|c|}
\hline $p=3$ & $2+\sqrt{3}$  & $p=23$ & $24+2\sqrt{23}$  \\
\hline $p=7$ & $8+3\sqrt{7}$  & $p=31$ & $1520+237\sqrt{31}$ \\
\hline $p=11$ & $10+3\sqrt{11}$ & $p=43$ & $3482+531\sqrt{43}$ \\
\hline $p=19$ & $170+39\sqrt{19}$ & $p=47$ & $48+7\sqrt{47}$ \\
\hline
\end{tabular}%
\end{table}

\begin{table}[H]
\caption{Explicit parametrization at different values for $p=3,7,11$}
\label{table2}
\begin{tabular}{|c|c|c|c|}
\hline $(m,k_1,k_2)$ & $\phi_3$ & $\phi_7$ & $\phi_{11}$ \\
\hline $(1,0,1)$ & $(2,0,3,2)$ & $(8,0,63,24)$ & $(10,0,99,30)$\\
\hline $(2,0,1)$ & $(7,0,12,8)$ & $(127,0,1008,384)$ & $(199,0,1980,600)$\\
\hline $(2,1,1)$ & $(14,7,12,8)$ & $(1016,381,1008,384)$ & $(1990,597,1980,600)$\\
\hline
\end{tabular}%
\end{table}

The explicit parametrization of the whole group of units is a delicate problem. On the contrary to the number field setting, the structure of the group of units of reduced norm $1$ in quaternion algebras has not been explicitly described yet. However, there exist some interesting theoretical results, see \cite{capi}.

\subsection*{General theoretical approach}

Let $H=\left(\frac{a,b}{\Q}\right)$ be an indefinite quaternion $\mathbb{Q}$-algebra of discriminant $D_H>1$. Let us recall the following result.

\begin{lemma}\cite[Thm. 4.3]{alsinabayer} Let $H$ be a quaternion $\mathbb{Q}$-algebra with discriminant $D_H$ and $F$ a quadratic number field with discriminant $D_F$. The following statements are equivalent:
\begin{itemize}
\item[1)] There exists an embedding of $F$ into $H$.
\item[2)] For every prime number $q$ such that $q\mid D_H$, $\left(\frac{D_F}{q}\right)\neq 1$.
\end{itemize}
\label{embeddings1}
\end{lemma}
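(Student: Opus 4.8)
The plan is to reduce the existence of an embedding $F \hookrightarrow H$ to a purely local splitting condition, and then to translate that condition into the quadratic symbol appearing in statement (2). First I would invoke the classical structural fact that, since the reduced degree of the quaternion algebra $H$ is $2$, which equals $[F:\mathbb{Q}]$, the field $F$ embeds into $H$ as a (maximal) subfield if and only if $F$ is a splitting field of $H$, i.e. $H \otimes_{\mathbb{Q}} F \cong \M(2,F)$. This is the standard theorem that a separable field extension whose degree matches the degree of a central simple algebra embeds into that algebra exactly when it splits it. Here one uses that $F/\mathbb{Q}$ is separable (characteristic zero) so $F$ is a genuine quadratic field.

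Second, I would analyze when $H_F := H \otimes_{\mathbb{Q}} F$ splits by means of the local--global principle for the Brauer group: $H_F$ splits over $F$ if and only if it splits at every place $w$ of $F$. Using the base-change formula for Hasse invariants, $\mathrm{inv}_w(H_F) = [F_w : \mathbb{Q}_q]\cdot \mathrm{inv}_q(H)$ in $\frac12\mathbb{Z}/\mathbb{Z}$, where $q$ is the place of $\mathbb{Q}$ below $w$ and the invariants take values in $\{0,\frac12\}$ by the local classification of quaternion algebras recalled earlier in the text. At any place $q$ where $H$ splits we have $\mathrm{inv}_q(H)=0$, so $H_F$ splits there automatically. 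At a ramified prime $q \mid D_H$ we have $\mathrm{inv}_q(H)=\frac12$, hence $\mathrm{inv}_w(H_F)=0$ precisely when $[F_w:\mathbb{Q}_q]$ is even, that is when $q$ is inert or ramified in $F$, and $\mathrm{inv}_w(H_F)=\frac12 \neq 0$ precisely when $q$ splits in $F$. Since $H$ is indefinite, the archimedean place is unramified and imposes no condition. Thus $F$ embeds into $H$ if and only if no ramified prime $q\mid D_H$ splits in $F$.

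Third, I would rewrite the condition ``$q$ does not split in $F$'' in terms of the quadratic symbol. For $F=\mathbb{Q}(\sqrt{D_F})$ with $D_F$ a fundamental discriminant, a prime $q$ splits iff $\left(\frac{D_F}{q}\right)=1$, is inert iff $\left(\frac{D_F}{q}\right)=-1$, and ramifies iff $\left(\frac{D_F}{q}\right)=0$, where the symbol is the Legendre symbol for odd $q$ and the Kronecker symbol for $q=2$. Hence ``no ramified prime splits in $F$'' is exactly ``$\left(\frac{D_F}{q}\right)\neq 1$ for every $q\mid D_H$,'' which is statement (2), completing the equivalence $(1)\Leftrightarrow(2)$.

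I expect two points to require the most care. The first is the reduction in Step 1: one must invoke the correct form of the splitting-field theorem and confirm that $F$ is a separable maximal commutative subalgebra of matching degree, so that embeddability and splitting are genuinely equivalent. The second, and the main technical obstacle, is the prime $q=2$ in Step 3, where one must use the Kronecker symbol conventions consistently so that $\left(\frac{D_F}{2}\right)$ correctly records the splitting type of $2$ in $F$; verifying the base-change invariant formula is otherwise routine given the local facts already stated in the text. An alternative, more hands-on route would recast embeddability of $\mathbb{Q}(\sqrt{d})$ as the solvability of $a y^2 + b z^2 - ab\,t^2 = d$ over $\mathbb{Q}$ via a trace-zero quaternion and then apply Hasse--Minkowski, but this leads back to the same local conditions and is less transparent.
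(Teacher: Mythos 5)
The paper offers no proof of this lemma: it is imported verbatim from Alsina--Bayer \cite{alsinabayer}, Thm.~4.3, so there is no internal argument to compare against. Your proof is correct and is essentially the standard derivation of Hasse's embedding criterion: reduce embeddability of the quadratic field $F$ to $F$ being a splitting field of $H$ (the degrees match and separability is automatic in characteristic zero), test splitting of $H\otimes_{\mathbb{Q}}F$ place by place via the base-change formula $\mathrm{inv}_w(H_F)=[F_w:\mathbb{Q}_q]\,\mathrm{inv}_q(H)$, and translate ``no ramified prime splits in $F$'' into $\left(\tfrac{D_F}{q}\right)\neq 1$, with the Kronecker convention handling $q=2$ correctly (for a fundamental discriminant, $D_F\equiv 1\pmod 8$ gives split and symbol $1$, $D_F\equiv 5\pmod 8$ gives inert and symbol $-1$, $D_F$ even gives ramified and symbol $0$). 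The one point worth making explicit is that the lemma, as printed, suppresses the archimedean place: the equivalence is only true as stated for \emph{indefinite} $H$, since for a definite algebra a real quadratic $F$ can satisfy condition (2) at every finite $q\mid D_H$ yet fail to embed, because $F\otimes_{\mathbb{Q}}\mathbb{R}\cong\mathbb{R}\times\mathbb{R}$ does not split the Hamiltonian quaternions. You do invoke indefiniteness at exactly the right moment, and it is the standing hypothesis both in this paper and in Alsina--Bayer, but condition (2) ranges only over the finite ramified primes precisely because of that assumption, so it deserves to be stated rather than left implicit.
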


We are interested in quadratic fields $\mathbb{Q}(\sqrt{q})$ which can be embedded in a quaternion algebra of discriminant $D_H$. Again, we assume that $q\equiv 3\pmod{4}$. As an application of the above lemma, we have the following result, which will be used later.

\begin{lemma}For $H=\left(\frac{p_1p_2}{\mathbb{Q}}\right)$, small ramified quaternion $\mathbb{Q}$-algebra, the set $A=\{q\equiv 3\pmod{4},q\mbox{ prime, }\mathbb{Q}(\sqrt{q})\hookrightarrow H\}$ is infinite
\end{lemma}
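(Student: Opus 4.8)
The plan is to reduce the claim to Dirichlet's theorem on primes in arithmetic progressions, using the embedding criterion of Lemma \ref{embeddings1}. Write $D_H = p_1 p_2$, and let $q \equiv 3 \pmod 4$ be an odd prime distinct from $p_1, p_2$. Since $q \equiv 3 \pmod 4$, the field $F = \mathbb{Q}(\sqrt q)$ has discriminant $D_F = 4q$. By Lemma \ref{embeddings1}, the embedding $F \hookrightarrow H$ exists if and only if $\left(\frac{4q}{\ell}\right) \neq 1$ for every prime $\ell \mid D_H$, where $\left(\frac{\cdot}{\cdot}\right)$ is the Kronecker symbol. My task is therefore to produce infinitely many primes $q \equiv 3 \pmod 4$ satisfying these finitely many local conditions.

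First I would dispose of the prime $\ell = 2$, which occurs whenever $H$ is of small ramified type A, so that $\{p_1, p_2\} = \{2, p\}$. Because $D_F = 4q$ is divisible by $4$, the Kronecker symbol $\left(\frac{4q}{2}\right)$ equals $0$, hence $\neq 1$ automatically; equivalently, $2$ ramifies in $\mathbb{Q}(\sqrt q)$ precisely because $q \equiv 3 \pmod 4$, so it can never split. Thus the condition at $\ell = 2$ is free and imposes no constraint on $q$. Next I would treat each odd prime $\ell = p_i$ dividing $D_H$. Here $\left(\frac{4q}{p_i}\right) = \left(\frac{q}{p_i}\right)$ is the ordinary Legendre symbol, and $\left(\frac{q}{p_i}\right) \neq 1$ exactly when $q$ is a quadratic non-residue modulo $p_i$. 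Crucially, this is a condition on the residue class of $q$ modulo $p_i$ alone: it holds precisely when $q \bmod p_i$ lies in the set $N_i$ of the $(p_i - 1)/2 \geq 1$ non-residues, each of which is coprime to $p_i$. No quadratic reciprocity is needed, since $\left(\frac{q}{p_i}\right)$ is by definition a function of $q \bmod p_i$.

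Combining the requirement $q \equiv 3 \pmod 4$ with the conditions $q \bmod p_i \in N_i$ through the Chinese Remainder Theorem, the admissible values of $q$ form a nonempty union of residue classes modulo $\mathrm{lcm}(4, p_1, p_2)$, each coprime to this modulus (since $3$ is coprime to $4$ and each chosen class is coprime to $p_i$). Dirichlet's theorem then yields infinitely many primes $q$ in this union, all but finitely many of which differ from $p_1, p_2$; every such $q$ lies in $A$, proving that $A$ is infinite.

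The only point requiring genuine care, and the main obstacle, is verifying that the combined system of congruences is consistent and supported on classes coprime to the modulus, so that Dirichlet's theorem genuinely applies. This rests on two elementary facts that must be checked cleanly: that each odd $p_i \geq 3$ admits at least one non-residue class (immediate from $(p_i - 1)/2 \geq 1$), and that these non-residue classes are automatically invertible modulo $p_i$. Together with the independence of the moduli $4$, $p_1$, $p_2$, this guarantees a nonempty admissible progression, which is exactly the input Dirichlet's theorem requires.
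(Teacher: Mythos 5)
Your proof is correct and follows essentially the same route as the paper's: translate the embedding condition via Lemma \ref{embeddings1} into the requirement that $q$ be a quadratic non-residue modulo each odd ramified prime, glue these finitely many congruence conditions together with $q\equiv 3\pmod 4$ by the Chinese Remainder Theorem, and conclude with Dirichlet's theorem on primes in arithmetic progressions. The only difference is that you explicitly dispose of the place $\ell=2$ (where the condition holds automatically because $q\equiv 3\pmod 4$ forces $2$ to ramify in $\mathbb{Q}(\sqrt{q})$), whereas the paper's argument is written only for $p_1,p_2>2$; this makes your version slightly more complete.
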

\begin{proof}
According to lemma \ref{embeddings1}, the primes $q$ such that $\mathbb{Q}(\sqrt{q})$ embeds in $H$, are precisely those such that $\left(\frac{4q}{p_1}\right), \left(\frac{4q}{p_2}\right)\neq 1$. Hence, we are seeking for prime numbers $q\equiv 3\pmod{4}$ such that both Legendre symbols are either $0$ or $-1$. The existence of such primes is granted by the Chinese reminder theorem and Dirichlet's theorem on primes in arithmetic progressions: indeed, the map
$$
\begin{array}{ccc}
\mathbb{Z}/p_1p_2\mathbb{Z} & \to & \mathbb{Z}/p_1\mathbb{Z}\times\mathbb{Z}/p_2\mathbb{Z}\\
([a]_{p_1p_2}) & \mapsto & ([a]_{p_1},[a]_{p_2})
\end{array}
$$
is surjective, hence we can take $([a]_{p_1},[b]_{p_2})\in \mathbb{Z}/p_1\mathbb{Z}\times\mathbb{Z}/p_2\mathbb{Z}$ such that $a$ is a non-square modulo $p_1$ and $b$ a non-square modulo $p_2$ (we have $\frac{(p_1-1)(p_2-1)}{4}$ pairs of this kind) and find an inverse image $x_0$ modulo $p_1p_2$. Now, for $p_1,p_2>2$, we apply again the Chinese reminder reminder theorem to find $x$ in $\mathbb{Z}/4p_1p_2\mathbb{Z}$ congruent to $x_0$ modulo $p_1p_2$ and to $-1$ modulo $4$. Now, by Dirichlet's theorem on primes in arithmetic progressions, we have infinitely many prime numbers in the class of $x$ modulo $4p_1p_2$.
\end{proof}

Fixing an embedding of $\Q(\sqrt{q})$ into $H$ is equivalent to fix  a pure quaternion $\omega=xI+yJ+zK\in H$ of norm $-p$, that is $(x,y,z)\in\Z^3$ such that $ax^2+by^2-abz^2=q$. Since $H$ is indefinite, this normic equation has infinitely many solutions, hence, there exist bijections $\varphi_p:\mathbb{N}\to \{(x,y,z)\in\Z^3:ax^2+by^2-abz^2=q\}$. Determining such a bijection is equivalent to solve the diophantine equation $ax^2+by^2-abz^2=q$, which is a classical problem in number theory. It is possible to give asymptotic estimates of the number of solutions, which involves the use of modular forms of fractional weight $3/2$ (cf. \cite{duke}). Nevertheless, there exists a polynomial algorithm which computes finite sets of solutions (cf. \cite{simon}).

Now, given a real quadratic field $\Q(\sqrt{q})$ embedded in $H$ we can obtain units in the natural order $\mathbb{Z}[1,I,J,K]$ of $H$ from the group of units of the ring of integers of the quadratic field, generated by $\varepsilon=x+y\sqrt{q}$  (notice that the fundamental unit $\varepsilon$ is usually normalized so that $x,y>0$ and its absolute value is greater than $1$, by taking Galois conjugate and/or changing sign, if necessary). Thus, identifying the units in the quaternion order with the corresponding matrices in the arithmetic Fuchsian group $\Gamma(D,1)$, we define maps $\psi_q:\mathbb{N}^2\to \Gamma(D,1)$ given by $\psi_q(t,m)=\left(x+y\varphi_q(t)\right)^m$.

\begin{proposition}The map $\psi_q$ is injective when restricted to $\mathbb{N}\times\left(\mathbb{N}\setminus\{0\}\right)$.
\end{proposition}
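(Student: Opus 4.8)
The plan is to exploit the fact that, by construction, each $\psi_q(t,m)$ lives inside the commutative subring of $H$ generated by a single pure quaternion, and then to read off both the exponent $m$ and the embedding index $t$ from intrinsic invariants of that element. First I would recall that $\varphi_q(t)$ is a pure quaternion $\omega_t\in H$ with $\omega_t^2=q$ (equivalently $\mathrm{N}(\omega_t)=-q$), and that $\varphi_q$ is injective onto the solution set of the normic equation, so distinct $t$ yield distinct $\omega_t$. Sending $\sqrt q\mapsto\omega_t$ defines a $\mathbb{Q}$-algebra embedding $\mathbb{Q}(\sqrt q)\hookrightarrow H$ under which the fundamental unit $\varepsilon=x+y\sqrt q$ maps to $x+y\omega_t$. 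Hence, writing $\varepsilon^m=A_m+B_m\sqrt q$ with $A_m,B_m\in\mathbb{Z}$, I obtain the key identity
\[
\psi_q(t,m)=(x+y\omega_t)^m=A_m+B_m\omega_t,
\]
in which the integers $A_m,B_m$ depend only on $m$ and not on $t$. From the recurrence $A_{m+1}=xA_m+qyB_m$, $B_{m+1}=yA_m+xB_m$ (with $A_0=1$, $B_0=0$) and $x,y\ge 1$, both sequences are positive and strictly increasing for $m\ge 1$; in particular $B_m\neq 0$ exactly when $m\neq 0$.

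Next I would suppose $\psi_q(t_1,m_1)=\psi_q(t_2,m_2)$ with $m_1,m_2\ge 1$ and separate the exponent from the embedding using the reduced trace. Since $\omega_t$ is pure, $\mathrm{Tr}(\omega_t)=0$, so $\mathrm{Tr}(\psi_q(t,m))=2A_m$ depends only on $m$; taking traces of the assumed equality gives $2A_{m_1}=2A_{m_2}$, and strict monotonicity of $(A_m)$ forces $m_1=m_2=:m$. Subtracting the scalar part $A_m$ from the equality $A_m+B_m\omega_{t_1}=A_m+B_m\omega_{t_2}$ then yields $B_m\omega_{t_1}=B_m\omega_{t_2}$; because $m\ge 1$ guarantees $B_m\neq 0$, I may cancel it to get $\omega_{t_1}=\omega_{t_2}$, whence $t_1=t_2$ by injectivity of $\varphi_q$. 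This establishes injectivity of $\psi_q$ on $\mathbb{N}\times(\mathbb{N}\setminus\{0\})$.

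The hard part—and the only place where the hypothesis $m\neq 0$ is genuinely used—is the nonvanishing $B_m\neq 0$ for $m\ge 1$, which is what allows the pure part of the quaternion to detect the embedding $\omega_t$; indeed, for $m=0$ one has $\psi_q(t,0)=1$ for every $t$, and injectivity fails, so the restriction of the domain is essential rather than cosmetic. The remaining ingredients are routine: the independence of $A_m,B_m$ from $t$ is just functoriality of the embedding $\mathbb{Q}(\sqrt q)\hookrightarrow H$, and the clean separation of the exponent is handled by the reduced trace, which is an invariant of $H$ insensitive to which copy of $\mathbb{Q}(\sqrt q)$ was chosen. (One should note the normalization $\mathrm{N}(\varepsilon)=1$ implicit in $\psi_q$ landing in $\Gamma(D,1)$, though the monotonicity of $A_m,B_m$—and hence the whole argument—follows from the recurrence regardless of the sign of $\mathrm{N}(\varepsilon)$.)
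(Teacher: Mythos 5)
Your proof is correct and follows essentially the same route as the paper: both arguments rest on the decomposition $\psi_q(t,m)=A_m+B_m\varphi_q(t)$ into scalar and pure parts, use strict growth of the scalar part $A_m$ to pin down the exponent $m$, and then cancel the nonzero coefficient $B_m$ to recover $t$ from injectivity of $\varphi_q$. Your version is slightly tidier in that it extracts the scalar part via the reduced trace and proves $B_m\neq 0$ and the monotonicity of $A_m$ from the recurrence, where the paper asserts $r\neq 0$ and establishes monotonicity by a binomial expansion, but these are cosmetic differences.
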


\begin{proof}
Consider $\psi_q(t_1,m_1)=\psi_q(t_2,m_2)$. Suppose first that $m_1=m_2=m$. Then, since we have $\varepsilon^m=l+r\sqrt{q}$, with $r\neq 0$, from $
l+r\varphi_q(t_1)=l+r\varphi_q(t_2)
$, we deduce  $\varphi_q(t_1)=\varphi_q(t_2)$, hence $t_1=t_2$.

Suppose now that $m_1\neq m_2$. In this case, setting $\psi_q(t_1,n_1)=l_1+m_1\varphi_q(t_1)$ and $\psi_q(t_2,m_2)=l_2+r_2\varphi_q(t_2)$, since $r_i\varphi_p(t_i)$ is a pure quaternion, we have that $l_1=l_2$.
However, by the binomial formula, and taking into account that $\varphi_q(t_1)^2=\varphi_q(t_2)^2=q$, we have
$$
l_1=\sum_{\tiny\begin{array}{l}j=0 \\ j \text{ even} \end{array}}^{m_1}{{m_1}\choose{j}}y^{j}q^{\frac{j}{2}}x^{m_1-j}.$$
Now assume  $m_1>m_2\geq j$, so we have that ${{m_1}\choose{j}}>{{m_2}\choose{j}}$, hence
$$ l_1 > \sum_{\tiny\begin{array}{l}j=0 \\ j \text{ even} \end{array}}^{m_2}{{m_1}\choose{j}}y^{j}q^{\frac{j}{2}}x^{m_1-j}
>\sum_{\tiny\begin{array}{l}j=0 \\ j \text{ even} \end{array}}^{m_2}{{m_2}\choose{j}}y^{j}q^{\frac{j}{2}}x^{m_2-j}=l_2,
$$
which is a contradiction.
\end{proof}

With these maps we can produce a countable family of non-overlapping infinite families of codewords:
\begin{proposition}Let $q_1,q_2\equiv 3\pmod{4}$ be two different prime numbers such that
$\Q(\sqrt{p_1}),\Q(\sqrt{q_2})\hookrightarrow H$. Then
$\psi_{q_1}(t_1,m_1)=\psi_{q_2}(t_2,m_2)$ if and only if $m_1=m_2=0$.
\end{proposition}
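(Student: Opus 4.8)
The plan is to exploit the fact that the two families $\psi_{q_1}$ and $\psi_{q_2}$ take values in two genuinely different quadratic subfields of $H$, whose only common elements are the rationals. The ``if'' direction is immediate and I would dispose of it first: when $m_1 = m_2 = 0$ both sides equal $(x+y\varphi_{q_i}(t_i))^0 = 1$, the identity, so they coincide regardless of $t_1,t_2$ (note that the $t_i$ play no role here).

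For the ``only if'' direction I would suppose $\psi_{q_1}(t_1,m_1) = \psi_{q_2}(t_2,m_2)$, i.e. $u_1^{m_1} = u_2^{m_2}$ in $H$, where I write $u_i = x_i + y_i\omega_i$ with $\omega_i = \varphi_{q_i}(t_i)$ the pure quaternion satisfying $\omega_i^2 = q_i$ and $x_i + y_i\sqrt{q_i} = \varepsilon_i$ the (irrational, infinite-order) fundamental unit. Since $u_i$ lies in the quadratic subalgebra $K_i := \mathbb{Q}(\omega_i) \cong \mathbb{Q}(\sqrt{q_i})$ and $K_i$ is closed under multiplication, every power $u_i^{m_i}$ lies in $K_i$; hence the common value lies in $K_1 \cap K_2$. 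The first key step is then to show $K_1 \cap K_2 = \mathbb{Q}$: both $K_i$ are $2$-dimensional over $\mathbb{Q}$, so an intersection strictly larger than $\mathbb{Q}$ would force $K_1 = K_2$, but $K_1 \cong \mathbb{Q}(\sqrt{q_1})$ contains no element squaring to $q_2$ (as $q_2$ is not a square in $\mathbb{Q}(\sqrt{q_1})$ for distinct primes $q_1 \neq q_2$), whereas $\omega_2 \in K_2$ does square to $q_2$. This contradiction yields $K_1 \cap K_2 = \mathbb{Q}$, so $u_1^{m_1} = u_2^{m_2} \in \mathbb{Q}$.

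The second key step, which I expect to be the main point requiring care, is to deduce $m_i = 0$ from $u_i^{m_i} \in \mathbb{Q}$. Since the embedding $\mathbb{Q}(\sqrt{q_i}) \hookrightarrow H$ sending $\sqrt{q_i} \mapsto \omega_i$ is a ring isomorphism onto $K_i$, the condition $u_i^{m_i} \in \mathbb{Q}$ is equivalent to $\varepsilon_i^{m_i} \in \mathbb{Q}$. For $m_i \geq 1$ the element $\varepsilon_i^{m_i}$ is a unit of $\mathbb{Z}[\sqrt{q_i}]$, so if it were rational it would equal $\pm 1$; as $\varepsilon_i$ has infinite order this is impossible, forcing $m_i = 0$. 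Running this for $i = 1,2$ gives $m_1 = m_2 = 0$ and completes the argument.

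As an alternative to the subfield-intersection argument, I could mirror the scalar/pure decomposition used in the previous proposition: writing $u_i^{m_i} = l_i + r_i\omega_i$, equality of the pure parts gives $r_1\omega_1 = r_2\omega_2$, and squaring (using $\omega_i^2 = q_i$) gives $r_1^2 q_1 = r_2^2 q_2$; since $q_1 q_2$ is a product of two distinct primes and hence not a perfect square, this forces $r_1 = r_2 = 0$, and $r_i = 0$ again means $m_i = 0$ by the infinite-order property of $\varepsilon_i$. Either route reduces the whole statement to the two standard facts that distinct primes generate distinct quadratic fields and that a fundamental unit has no nontrivial rational power, so the only genuine work is bookkeeping around the embedding $\varphi_{q_i}$.
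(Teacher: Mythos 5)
Your proof is correct. Your \emph{alternative} route at the end is essentially the paper's own proof: the authors write $\psi_{q_i}(t_i,m_i)=l_i+r_i\varphi_{q_i}(t_i)$, equate scalar and pure parts, square the pure parts to get $r_1^2q_1=r_2^2q_2$, and conclude $r_1=r_2=0$ and hence $m_1=m_2=0$. Your primary route is genuinely different in packaging: instead of comparing pure parts directly, you observe that each $u_i^{m_i}$ lives in the quadratic subalgebra $K_i=\Q(\omega_i)\cong\Q(\sqrt{q_i})$, prove $K_1\cap K_2=\Q$ by a dimension count plus the fact that no element of $\Q(\sqrt{q_1})$ squares to $q_2$, and then reduce to the statement that a fundamental unit has no nontrivial rational power. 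The two arguments rest on the same pair of facts (distinct primes yield distinct quadratic fields; $\varepsilon_i$ has infinite order), but your subfield-intersection version is more structural and, usefully, makes fully explicit the last step --- that $r_i=0$ (equivalently $\varepsilon_i^{m_i}\in\Q$) forces $m_i=0$ --- which the paper dispatches with the rather terse remark ``since $q_1,q_2>0$, we deduce that $m_1=m_2=0$.'' The paper's computation is shorter and stays entirely inside the quaternion algebra; yours generalizes more readily to any two non-isomorphic quadratic subfields. Either way the content is the same and your write-up has no gaps.
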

\begin{proof}The \emph{if} clause is trivial.
Suppose $\psi_{q_1}(t_1,m_1)=\psi_{q_2}(t_2,m_2)$. Writing $\psi_{q_1}(t_1,m_1)=l_1+r_1\varphi_{q_1}(t_1)$ and $\psi_{q_2}(t_2,m_2)=l_2+r_2\varphi_{q_2}(t_2)$, we have that $l_1=l_2$ and $r_1\varphi_{q_1}(t_1)=r_2\varphi_{q_2}(t_2)$. Taking squares we obtain $r_1^2q_1=r_2^2q_2$, which implies $r_1=r_2=0$ and, since $q_1,q_2>0$, we deduce that $m_1=m_2=0$.
\end{proof}

The above facts, allow us to conclude the following

\begin{theorem}Let $H=\left(\frac{p_1,p_2}{\mathbb{Q}}\right)$ be a small ramified quaternion $\mathbb{Q}$-algebra of discriminant $D$ with $p_1\equiv 3\pmod{4}$ square free. Let $\Gamma$ be the subgroup of $\Gamma(D,1)$ consisting of matrices with entries in $\Z[\sqrt{p_1}]$. There exists a parametrization of an infinite subset of $\Gamma$  by three degrees of freedom.
\end{theorem}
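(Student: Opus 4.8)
The plan is to assemble the statement from the two preceding propositions together with the lemma guaranteeing that the set $A$ is infinite; no genuinely new mechanism is needed. The parametrization will be the map
\[ \Psi\colon A\times\mathbb{N}\times(\mathbb{N}\setminus\{0\})\to\Gamma,\qquad \Psi(q,t,m)=\psi_q(t,m)=(x+y\varphi_q(t))^m. \]
The three degrees of freedom are exactly the embedded real quadratic field $\Q(\sqrt{q})$ (encoded by $q\in A$), the chosen solution of the attached normic equation (encoded by $t$ through the bijection $\varphi_q$), and the exponent $m$.

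First I would verify that the image of $\Psi$ lands in $\Gamma$. Membership in $\Gamma(D,1)$ is already built into the construction: since $\varphi_q(t)^2=q$, the element $x+y\varphi_q(t)$ has reduced norm $x^2-qy^2=\mathrm{N}(\varepsilon_q)=1$, so it is a norm-one unit of the natural order $\Z[1,I,J,K]\subseteq\mathcal{O}$ and hence maps into $\Gamma(D,1)$. What remains is the refinement that the matrix entries lie in $\Z[\sqrt{p_1}]$. The key observation is that $\varphi_q(t)=x_0I+y_0J+z_0K$ has integer coordinates, and that $\varepsilon_q=x+y\sqrt{q}$ has $x,y\in\Z$ because $q\equiv 3\pmod 4$ forces $\Z[\sqrt q]$ to be the full ring of integers of $\Q(\sqrt q)$. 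Thus $x+y\varphi_q(t)$ is an integral quaternion, and applying the monomorphism $\phi$ with $a=p_1$ produces a matrix whose entries all lie in $\Z[\sqrt{p_1}]$; this property is preserved under the $m$-th power since $\Z[\sqrt{p_1}]$ is a ring.

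Next I would establish that $\Psi$ is injective. For two triples with the same first coordinate $q$, this is precisely the proposition asserting that $\psi_q$ is injective on $\mathbb{N}\times(\mathbb{N}\setminus\{0\})$. For triples with distinct first coordinates $q_1\neq q_2$, the proposition on non-overlapping families gives that $\psi_{q_1}(t_1,m_1)=\psi_{q_2}(t_2,m_2)$ forces $m_1=m_2=0$, a possibility excluded by the restriction $m\geq 1$. Combining the two cases shows $\Psi$ is injective on its whole domain. Since the lemma guarantees $A$ is infinite, this domain is an infinite, genuinely three-fold index set, so its injective image is an infinite subset of $\Gamma$ parametrized by three degrees of freedom, which is the assertion.

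I expect the content to be organizational rather than deep, as the substantive work has already been done in the injectivity and non-overlap propositions and in the infinitude of $A$. The one point deserving care is that varying the embedded field $\Q(\sqrt q)$ over all $q\in A$ never leaves the subgroup $\Gamma$ of matrices with entries in $\Z[\sqrt{p_1}]$: this succeeds precisely because the matrix model $\phi$ always has base field $\Q(\sqrt{p_1})$ (with $a=p_1$), independently of which quadratic subfield is embedded, so integrality of the quaternion coordinates transfers directly to $\Z[\sqrt{p_1}]$-integrality of the matrix entries. A secondary, purely formal, subtlety is to read ``three degrees of freedom'' as the injectivity of $\Psi$ on the three-fold infinite index set; one should also note the routine caveat that if the fundamental unit of $\Q(\sqrt q)$ has norm $-1$ one replaces $\varepsilon_q$ by $\varepsilon_q^2$ to stay in reduced norm one.
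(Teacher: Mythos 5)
Your proof is correct and follows essentially the same route as the paper: the map $\Psi(q,t,m)=\psi_q(t,m)$ on $A\times\mathbb{N}\times(\mathbb{N}\setminus\{0\})$, with injectivity assembled exactly from the two preceding propositions and the infinitude of $A$. You are in fact slightly more careful than the paper, which does not explicitly check that the image lies in the subgroup $\Gamma$ of matrices with entries in $\Z[\sqrt{p_1}]$; your norm-$(-1)$ caveat, however, is vacuous here, since $x^2-qy^2=-1$ has no solutions when $q\equiv 3\pmod 4$.
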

\begin{proof}Let $A$ be the infinite set of primes $\equiv 3\pmod{4}$ such that $\Q(\sqrt{q})$ embeds into $H$. For any $p\in A$, fix a generator of the unit group of the form $x_q+y_q\sqrt{q}$ with $x_q,y_q>0$. Now, the map $\Psi:A\times\mathbb{N}\times\left(\mathbb{N}\setminus\{0\}\right)\to\Gamma(D,1)$ defined by $\Psi(q,s,m)=\psi_q(s,m)$ is injective. 
\end{proof}

\begin{remark} Notice that this theorem is not explicit, since it depends on how to produce the solutions of the normic form. But using the algorithm described in \cite{simon}, we can explicitly parametrize an infinite family of units by two dregrees of freedom. Further studies on the structure of the group of units will allow us to make the full parametrization more explicit.
\end{remark}

\subsection*{Relation to the size duplication}


 If a matrix $\gamma$ corresponds to the $4$-tuple $(x,y,z,t)$, and this $4$-tuple corresponds to the $3$-tuple $(m,k_1,k_2)$ of independent nonnegative integers, then the matrix $-\gamma$ corresponds to the $3$-tuple $(-m,k_1,k_2)$. Notice that this is not ambiguous since the original triples are assumed to have nonnegative entries, and $\theta>0$.

To recover the right $3$-tuple from a received signal, we  first check whether it belongs to $\mathcal{H}$ or to $-\mathcal{H}$. In the first case, we use the point reduction algorithm to obtain $(x,y,z,t)$ and the parametrization to obtain $(m,k_1,k_2)$. In the second case, we have received $v=-\gamma_k(\tau)+n$, hence, we apply the point reduction algorithm to $-v$, obtain $(x,y,z,t)$ and $(m,k_1,k_2)$, and we decode it as $(-m,k_1,k_2)$.

\end{document}